\newtheorem{theorem}{Theorem}[section]
\theoremstyle{definition}
\newtheorem{definition}[theorem]{Definition}
\newtheorem{assumption}{Assumption}
\newtheorem{example}{Example}
\begin{document}

\title[Wave-shape oscillatory model] 
      {Wave-shape oscillatory model for nonstationary periodic time series analysis}

\author{Yu-Ting Lin}
\address{Department of Anesthesiology, Taipei Veteran General Hospital, Taipei, Taiwan; School of Medicine, National Yang Ming Chiao Tung University, Taipei, Taiwan}

\author{John Malik}
\address{Department of Mathematics, Duke University, Durham, NC, USA}

\author{Hau-Tieng Wu}
\address{Department of Mathematics, Duke University, Durham, NC, USA; Department of Statistical Science, Duke University, Durham, NC, USA; Mathematics Division, National Center for Theoretical Sciences, Taipei, Taiwan}
\email{hauwu@math.duke.edu}

\maketitle
\begin{abstract}
The oscillations observed in many time series, particularly in biomedicine, exhibit morphological variations over time.    
These morphological variations are caused by intrinsic or extrinsic changes to the state of the generating system, henceforth referred to as dynamics. 
To model these time series (including and specifically pathophysiological ones) and estimate the underlying dynamics, we provide a novel {\em wave-shape oscillatory model}.
In this model, time-dependent variations in cycle shape occur along a manifold called the {\em wave-shape manifold}. To estimate the wave-shape manifold associated with an oscillatory time series, study the dynamics, and visualize the time-dependent changes along the wave-shape manifold, we {propose a novel algorithm coined  {\em Dynamic Diffusion map} (DDmap) by applying} the well-established diffusion maps (DM) algorithm to the set of all observed oscillations.  We provide a theoretical guarantee on the dynamical information recovered by the {DDmap} algorithm under the proposed model.
Applying the proposed model and algorithm to arterial blood pressure (ABP) signals recorded during general anesthesia leads to the extraction of nociception information. 
Applying the wave-shape oscillatory model and the {DDmap} algorithm to cardiac cycles in the electrocardiogram (ECG) leads to ectopy detection and a new ECG-derived respiratory signal, even when the subject has atrial fibrillation.
\end{abstract}

\section{Introduction}

Oscillatory time series are ubiquitous in various scientific fields, such as medicine, epidemiology, {cosmology, geology,} and economics. If the period of oscillation is fixed, the time series is said to exhibit {\em seasonality}; if the period is not fixed, the term {\em cyclicity} is used.
{The period of an oscillatory physiological time series is rarely fixed.} {The ubiquitous cardiovascular waveform data in the modern hospital environment are typical examples. See} Figure \ref{Figure:pulse} for an example. {Besides the period of oscillation, a} visual inspection of {this} time series reveals several {additional} quantities that change {with} time: the amplitude, {the trend, and the shape of each oscillatory cycle -- a distinct quantity that we refer to as the {\em wave-shape}.}
{In biomedicine, these changes are the direct {result} of modulations in the state of the human system that generate the time series.
In this sense, detecting and quantifying modulations in the time series is akin to observing the dynamics of that human system.}

{For seasonal time series, existing approaches to} quantifying the available dynamical information include the seasonal auto-regressive integrated moving average (SARIMA) model \cite{Brockwell_Davis:2002}, the trigonometric Box-Cox transform, the trend and seasonal components algorithm (TBATS) \cite{DeLivera_Alysha_Hyndman_Snyder:2011}, {and} the complex de-modulation approach \cite{hasan1983complex}, among others.
However, these methods are inherently limited when the time series' period {and amplitude of oscillation, not to mention the wave-shape, are non-constant}. Modern time-frequency analysis tools \cite{flandrin1998time}, {particularly nonlinear-type time-frequency analysis tools \cite{wu2020current}} like the synchrosqueezing transform (SST), have been considered for {non-seasonal signals} \cite{Chen_Cheng_Wu:2014}; however, as we will carefully discuss in {the supplementary material, both the model behind the original SST and its generalization \cite{lin2018wave} fail when modulations in frequency, amplitude, or wave-shape do not occur gradually (a phenomenon} commonly seen when the underlying human system assumes {\em pathophysiological status}). Since well-behaved human systems are rarely of clinical interest, this challenge necessitates a different model for analyzing oscillatory physiological time series and, in particular, a tool for extracting the dynamics encoded by the time-varying morphology of their cycles.

\begin{figure}[h!]
\centering
\includegraphics[width=\textwidth]{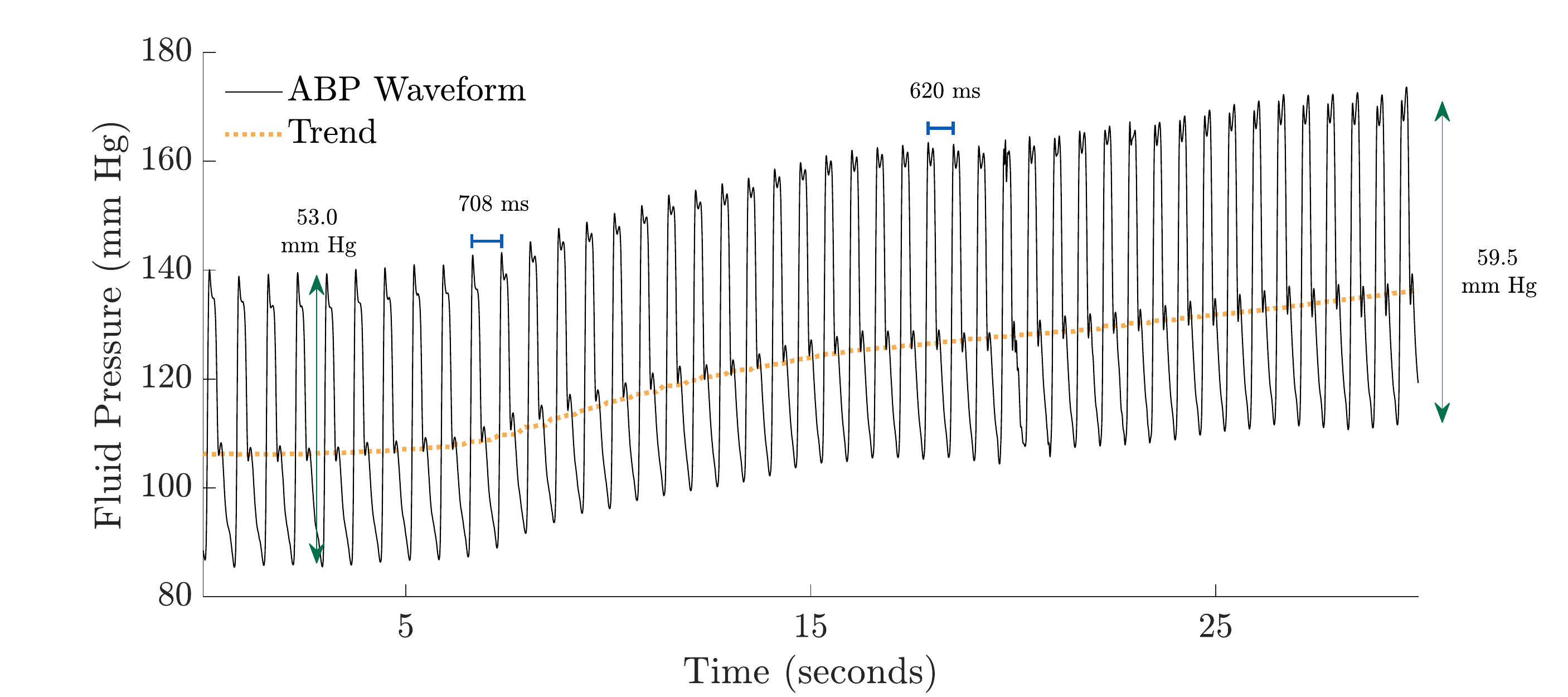}\\
\vspace{0.1in}\hspace{0.22in}\includegraphics[width=.44\textwidth]{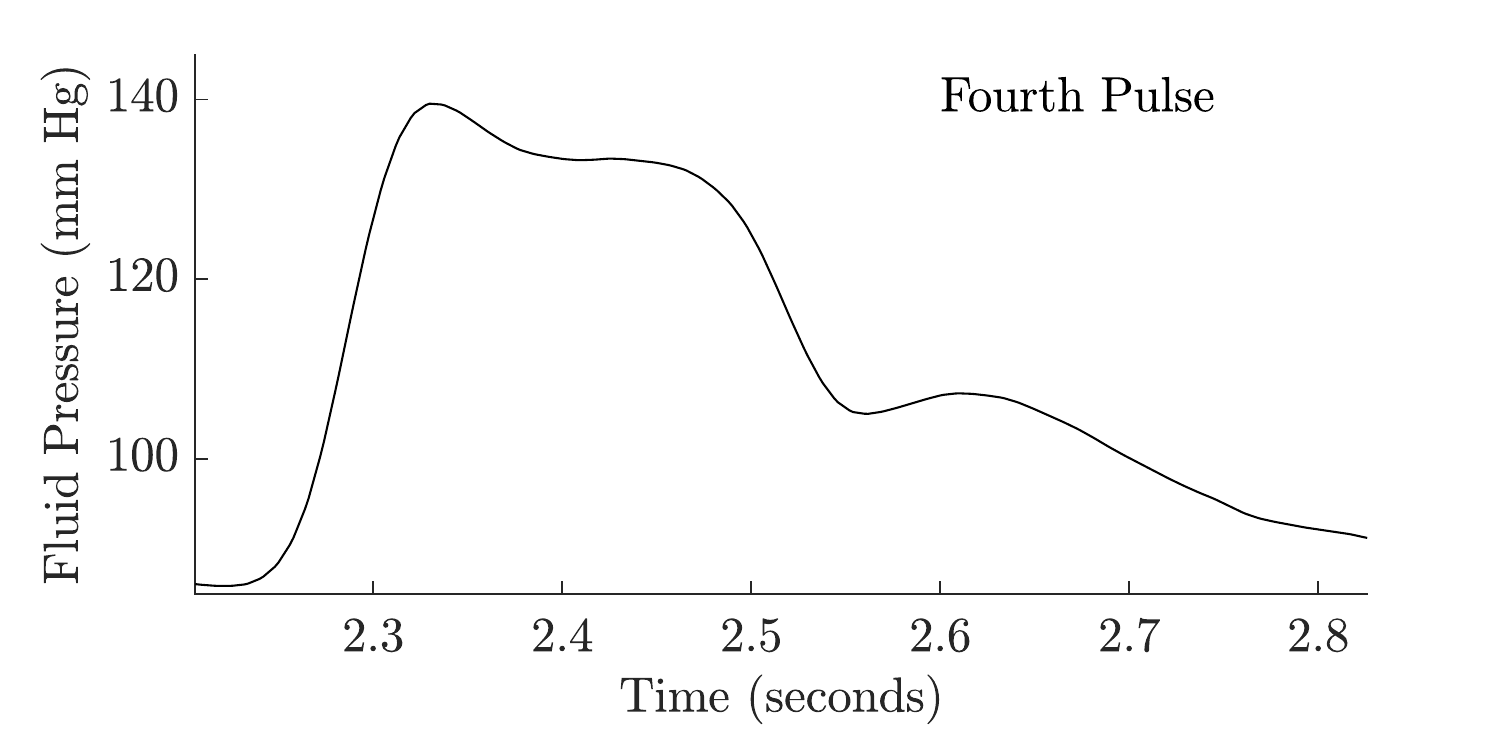}
\includegraphics[width=.44\textwidth]{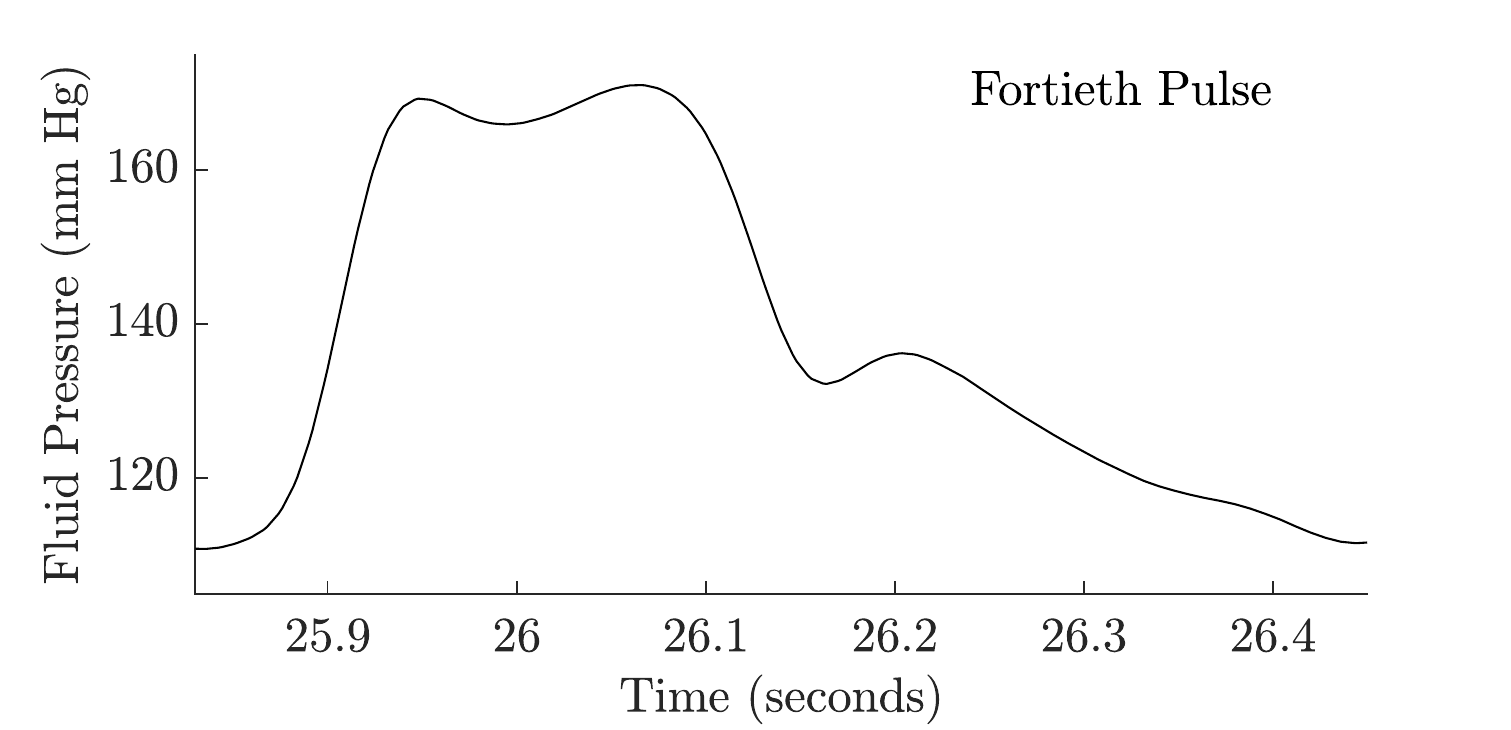}
\caption{An arterial blood pressure (ABP) signal obtained from {an anesthetized} patient undergoing the endotracheal intubation procedure. This oscillatory physiological time series changes in amplitude and frequency over time in response to the noxious stimulation associated with intubation. In addition, the shape of its non-sinusoidal oscillatory cycles (pulses) is also time-varying. This distinct type of modulation has strong clinical importance, and we refer to it as {\em wave-shape modulation}.}\label{Figure:pulse} 
\end{figure}

\subsection{Evidence from the clinic for the necessity of a new time series model}\label{Section:Motivation Physiological}

Biomedical time series are routinely used in the clinic to monitor a variety of human systems. 
Our motivation for a {new 
time series} model comes in particular from the existing clinical approaches to cardiovascular waveform analysis. In this field, the oscillatory cycles from cardiovascular time series such as the ABP and ECG signals are processed in order to infer physical properties of the cardiovascular system. 
For example, vascular wall tension, blood volume, and {cardiac contractility} all affect the shape of the oscillatory cycles in the ABP signal and can therefore be inferred \cite{Nicholas_ORourke_Vlachopoulos:2011}. Unifying the existing approaches to pulse waveform analysis is the search for canonical features of the oscillatory cycle that empirically measure physical properties of the cardiovascular system.   
In the clinic, the relative and absolute positions of canonical landmarks on the pulse wave-shape (such as the well-known dicrotic notch) serve as indicators of cardiovascular status \cite{avolio2009role}.
Examples of these indicators include augmentation pressure \cite{Weber_Auer_ORourke_Kvas:2004} and the augmentation index \cite{o2005updated}.
These indicators are widely used because of their simplicity and empirical consistency.  On one hand, {the features of one patient's wave-shape are compared with the features of another patient's wave-shape in order to facilitate} diagnosis;
this approach has been proposed as a guide for the medical treatment of high blood pressure \cite{avolio2009role}. On the other hand, {the features of one patient's wave-shape are compared with the features of his or her successive wave-shapes in order to} summarize that patient's progress during surgery or treatment.
For example, commercial monitoring instruments such as the FloTrac\texttrademark system (Edwards Life Sciences, Inc., Irvine, CA, USA) bring physicians a continuous and relatively non-invasive assessment of the hemodynamic system by estimating cardiac output, mean arterial pressure, stroke volume, stroke volume variation, and systemic vascular resistance from the pulse waveform. This information facilitates the clinical management of hemodynamic instability \cite{van2017value,slagt2014systematic,teboul2016less}.

The first limitation consistent across these existing approaches is that they rely on {\em a priori} knowledge about the physiological significance of contrived pulse wave-shape features, limiting their applicability across mediums and rendering them useless for analyzing nascent or alternative signal types.
Second, the relationship between cardiovascular state and the pulse wave-shape may be more delicate or comprehensive than is implied by the {\em ad hoc} constructed feature.
Third, the impact of a significant change in cardiovascular state may be undetectable by the human eye or inconsistent across subjects, in which case the traditional approach to detecting that change would fail. 
Fourth, features that depend on the existence and detection of canonical landmarks struggle when encountering pathological or noisy waveforms. Specifically, if the waveform is pathological, the more delicate landmarks may be missing or may not have the same physiological significance. Moreover, the analysis of long-term signals would require the development of numerous sensitive and precise landmark detection algorithms, the lack of which would result in compounded inaccuracies.
Due to these limitations in the field of pulse waveform analysis, there is evidently a need for {an alternative} statistical model that makes precise the manifestation of dynamics in cardiovascular time series exhibiting wave-shape modulation and an accompanying algorithm that can succeed in estimating those dynamics.
{While we motivate the necessities of a new model using the biomedical signals, similar limitations might hold for times series in other fields}.

\subsection{Contributions}

{First, we introduce the {\em wave-shape oscillatory model}, a novel approach to describing an oscillatory time series exhibiting wave-shape modulation (see Section \ref{sec:waveshapemanifold}). 
The model generalizes both traditional approaches to cardiovascular waveform analysis and modern mathematical approaches to oscillatory time series analysis. We make precise the standing assumption that changes in cycle shape are reflections of the dynamics of the underlying system. We introduce the notion of a wave-shape manifold: a structure hosting the set of observed cycles whose dimensionality is intuitively restricted by the number of independent factors giving rise to wave-shape modulation. 
In Section \ref{sec:dm}, we suggest analyzing a time series adhering to the wave-shape oscillatory model by applying the diffusion maps (DM) algorithm to its set of oscillatory cycles, and we show that by doing so, one is guaranteed to recover the dynamics hidden in that time series. ({The theory is postponed to} Section \ref{Section:Theory} of the supplementary material.) {We envision the proposed wave-shape oscillatory model's applicability to many time series of this type outside the biomedical field.}
We study two clinical databases to demonstrate that the proposed model allows us to generate a compact representation of the dynamical information hidden in an oscillatory physiological time series that may be pathological (see Section \ref{Sect:Afib} and Section \ref{sec:application2} of the supplementary material).
The main article is closed with a discussion in Section \ref{sec:discussion}.
In Section \ref{Section:Review} of the supplementary material, we provide a mathematical discussion of the wave-shape oscillatory model's relationship to previous attempts to modeling oscillatory time series with time-varying wave-shape. }

\section{The wave-shape oscillatory model}\label{sec:waveshapemanifold}

In this section, we provide a novel {model for oscillatory time series featuring wave-shape modulation that is suitable for analyzing pathological subjects. {The first consideration for our model is that an oscillatory time series can be viewed as a sequence of cycles. If $f \colon \mathbb{R} \rightarrow \mathbb{R}$ is an oscillatory time series, we assume that there is a set of functions $\mathcal{M} \subset L^2(\mathbb{R})$ so that each cycle of $f$ (after shifting the cycle from its time of occurrence to the origin) is a member of $\mathcal{M}$. For simplicity, we will assume that the support of each cycle is contained in the interval $[-\sfrac{1}{2}, \sfrac{1}{2}]$. An analysis is not yet feasible if we do not impose any conditions on $\mathcal{M}.$ Therefore, we take the following physiological fact into consideration: the internal physical and chemical conditions of an organism are well-constrained, even when under pathophysiological status. This condition leads us to assume that $\mathcal{M}$ is a low-dimensional structure embedded in $\mathcal{L}^2(\mathbb{R})$, and, to enable the analysis that will follow, we additionally assume that $\mathcal{M}$ is a smooth and compact Riemannian manifold. In Definitions~\ref{Def:waveshapemodel} and~\ref{Def:Waveshapemanifold}, we make these notions precise. 
\begin{definition}\label{Def:waveshapemodel}
A time series $f \colon \mathbb{R} \rightarrow \mathbb{R}$ adheres to the {\em  wave-shape oscillatory model} with wave-shape manifold $\mathcal{M} \subset L^2(\mathbb{R})$ if 
\begin{gather}
f(t) = \sum_{j \geq 1} s_j(t - t_j) +\varepsilon(t),
\end{gather}
for all $t \in \mathbb{R}$, where $t_j \in \mathbb{R}$ is the time at which the $j$-th cycle occurs, $s_j \in \mathcal{M}$ is the $j$-th cycle, and $\varepsilon \colon \mathbb{R} \rightarrow \mathbb{R}$ is stationary {white} random noise with mean zero and finite variance.
\end{definition}

\begin{definition}\label{Def:Waveshapemanifold} 
Let $\mathcal{H}$ be the subspace of $\mathcal{C}^{2}(\mathbb{R})$ consisting of functions whose support is a subset of $[-\sfrac{1}{2}, \sfrac{1}{2}]$. A {\em wave-shape manifold} is a smooth, compact, and low-dimensional Riemannian manifold embedded in $\mathcal{H}$. 
\end{definition}}

Note that $\mathcal H$ is a pre-Hilbert subspace {of} $L^2(\mathbb{R})$ {and that the smoothness imparted to each oscillatory cycle is necessary for drawing connections with previous mathematical approaches to modeling oscillatory time series}. 
We do not impose {any additional} conditions on the appearance of $\mathcal{M}$; in general, $\mathcal{M}$ may be disconnected {and have a boundary}. 
{Our intention with the wave-shape manifold is to} acknowledge the complicated {nonlinearity of the} structure underlying the time-dependent variations {in wave-shape}.  See Figure \ref{Figure:waveshape manifold simulation} for a visual example of a contrived wave-shape manifold. {Note that we used principal component analysis since we want to faithfully reflect the nonlinearity of the manifold via a linear projection.}
The structure of the wave-shape manifold is obviously guided by its classical landmark structure. However, it is also guided by other morphological features that we may not be able to quantify easily via landmarks. 
{Finally,} we {reiterate} that the wave-shape {oscillatory} model {has} not {arisen} from thin air{, but is inspired by traditional approaches to cardiovascular waveform analysis and generalizes recent mathematical approaches to modeling oscillatory time series.} The intimate relationship between the proposed wave-shape oscillatory model and the model behind {nonlinear-type time-frequency analysis} \cite{lin2018wave} is elaborated upon in the supplementary material {(see Section \ref{Section:Review})}.
\begin{figure}[h!]
\centering
\includegraphics[width=1\textwidth, trim=3cm 2cm 3cm 1cm, clip]{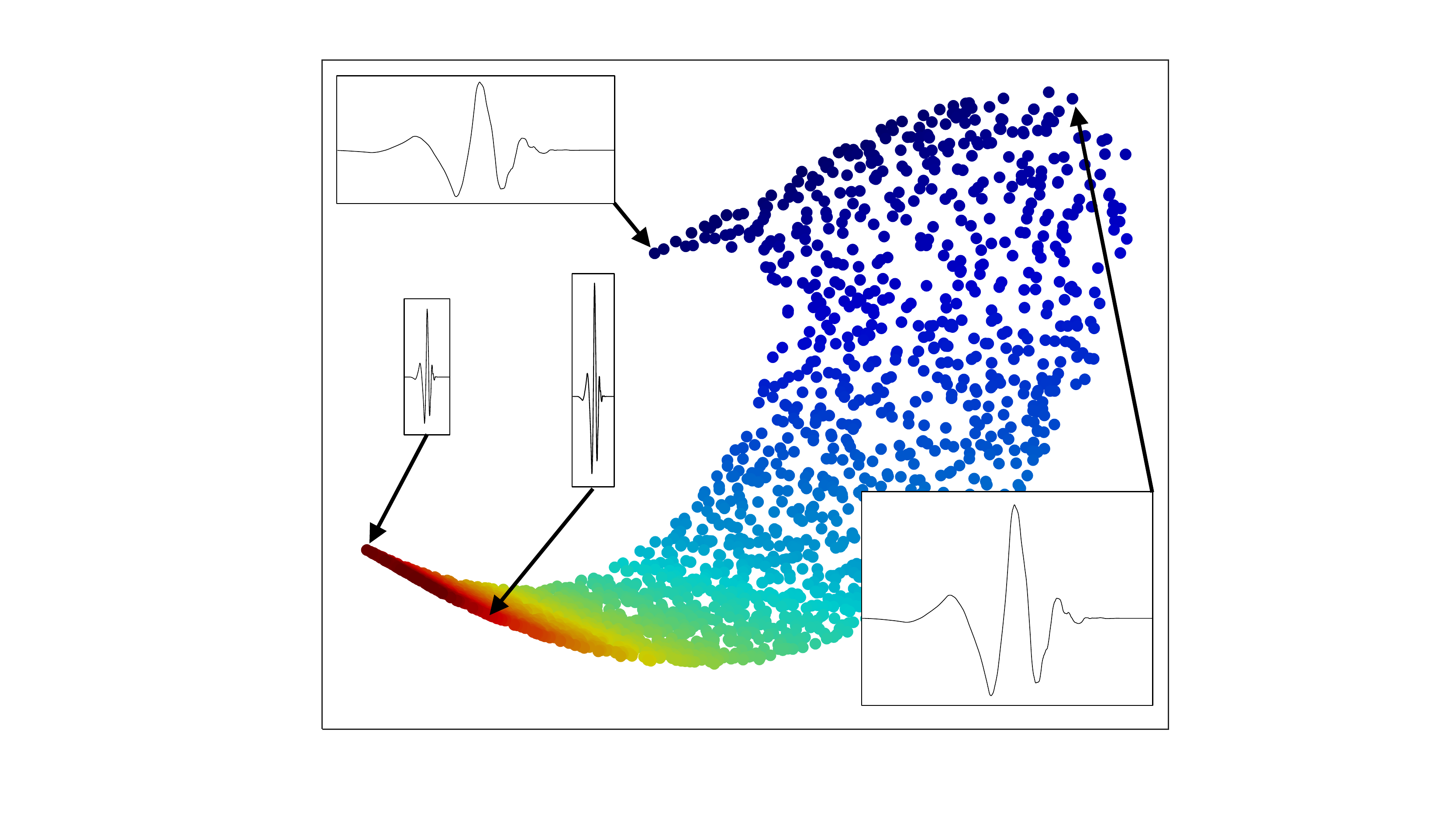}
\caption{An illustration of a {contrived} wave-shape manifold.  We build a collection of oscillatory cycles by dilating and scaling a template. 
We show a three dimensional embedding of this collection obtained by principle component analysis. {Points in the embedding are colored according to their associated dilation factor.}} 
\label{Figure:waveshape manifold simulation}
\end{figure}

\subsection{The underlying dynamics}

{The generation of the sequences $\{s_j\}_{j \geq 1}$ and $\{t_j\}_{j\geq 1}$ is a discrete dynamical process}
\begin{equation}\label{Introduction:dynamics}
(s_j, t_j)  = T\left((s_{j-1},t_{j-1}),(s_{j-2},t_{j-2}),...\right) \in \mathcal{M} \times \mathbb{R}
\end{equation}
{wherein} succeeding {oscillatory} cycles and their locations may depend on the location and morphology of any number of preceding cycles.
{The process $T$ will be different for each} physiological signal, and it is not possible to exhaustively discuss all cases here. {However, in the case of the ECG signal, some progress has been made.}
{Models describing the generation of $\{t_j\}_{j\geq 1}$ (independently of $\{s_j\}_{j \geq 1}$) have been developed to aid in heart rate variability (HRV) analysis; these models include} the long-range correlation model \cite{peng1995quantification}, the {history-dependent point-process model} \cite{Barbieri2005}, and others \cite[Chapter 4]{Clifford:2006:AMT:1213221}. However, to the best of our knowledge, the generation of {the oscillatory patterns described by} $\{s_j\}_{j \geq 1}$ has not been explicitly modeled. 

{We emphasize that Model \eqref{Introduction:dynamics} is suitable for modeling pathological recordings, which is less considered in the literature. For example,} {an ECG signal featuring premature ventricular contractions (a common cardiac arrhythmia) like those shown in Figures~\ref{Afib1} and \ref{PVCExample} has a wave-shape manifold with at least two connected components, and abnormal triggering points in the ventricles cause the sequence of observed oscillations to jump from one component of the manifold to the other. Modeling this ``exterior force'' is yet another challenge.}

\begin{figure}[h!]
\centering
\includegraphics[width=.8\textwidth, trim=7cm 6cm 7cm 5cm, clip]{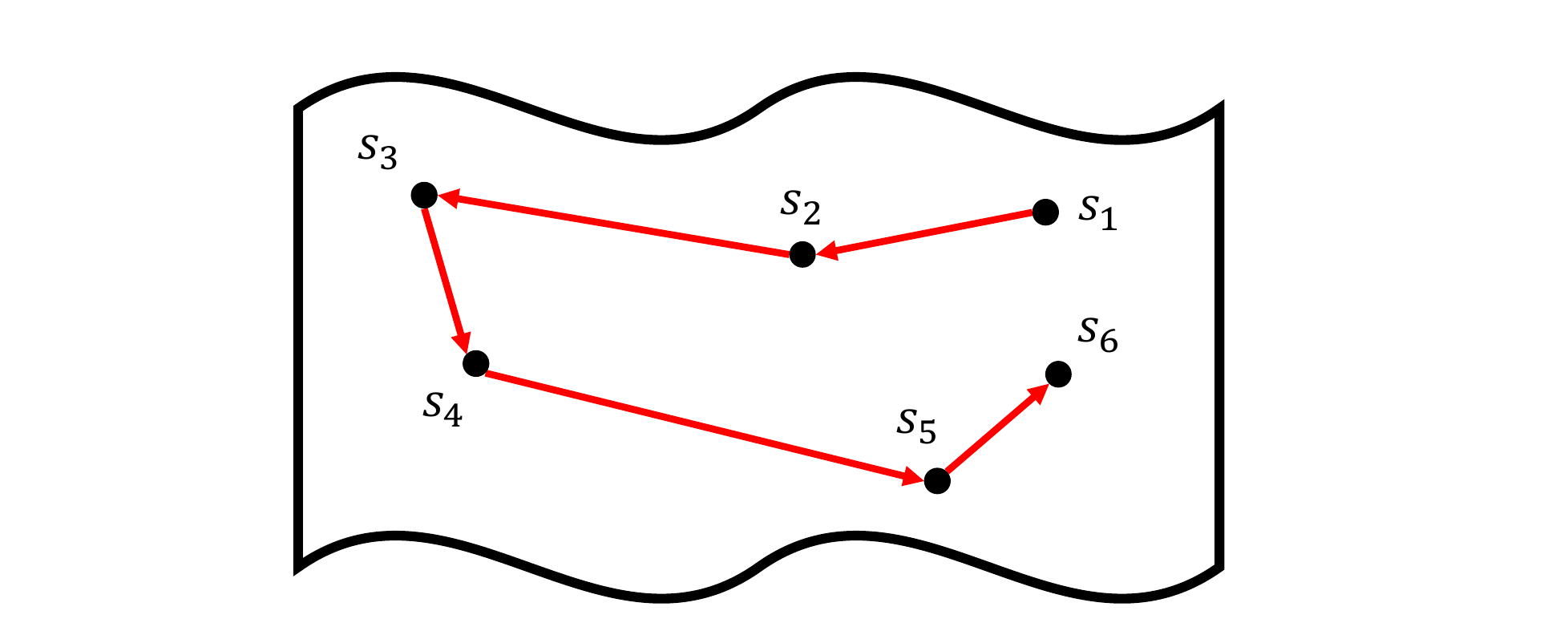}
\caption{An illustration of the process generating the parameters of $f$, a time series adhering to the wave-shape oscillatory model. On the surface of the wave-shape manifold $\mathcal{M}$, we show the sequence of wave-shape functions $\{s_j\}_{j\geq 1}$ observed in the time series.} 
\label{Figure:wave-shape oscillatory model}
\end{figure}

\if false
Clearly, the wave-shape oscillatory model \eqref{Introduction:ANHM1} differs from the phenomenological model \eqref{Introduction:ANHM0} in how the oscillation is modeled. The main benefit of \eqref{Introduction:ANHM1} is twofold. First, the dynamics encoded in the oscillatory morphology can be fully modeled and captured. Second, even if the oscillatory pattern dramatically changes from one cycle to the next, it can be captured by \eqref{Introduction:ANHM1}.

With the wave-shape oscillatory model, we expect to determine the dynamics encoded by the time-varying morphology of oscillatory cycles. The main challenges are the nonlinearity of the wave-shape manifold and the possible noise that deviates the observed oscillatory pattern from the wave-shape manifold. In the next section, we handle these challenges and achieve our goal using the diffusion geometry based algorithm, the DM.
\fi

\section{Estimating the underlying dynamics}\label{sec:dm}

{{The ultimate goal is obtaining $T$ from the recorded signal $f$. However, one challenge to obtaining $T$ is {the dependence between}} $\{s_j\}_{j\geq 1}$ and $\{t_j\}_{j \geq 1}${. For example,} in the case of the ECG signal, there is a nonlinear relationship between the QT interval (i.e., the support of $s_j$) and the {R-to-R} interval (i.e., $t_j - t_{j-1}$) \cite{malik2001problems}. {In this paper, we provide an algorithm to depict the profile of $T$.}
We begin by providing a brief problem statement. Suppose a time series $f \colon \mathbb{R} \rightarrow \mathbb{R}$ adheres to the wave-shape oscillatory model (see Definition~\ref{Def:waveshapemodel} and Definition~\ref{Def:Waveshapemanifold}). We assume that the times $\{t_j\}_{j\geq 1} \subset \mathbb{R}$ demarcating $f$'s oscillatory cycles have been correctly determined. 
We assume that we can build a set $\mathcal{X} = \{x_j\}_{j\geq 1} \subset L^2(\mathbb{R})$, where
\begin{gather}
x_j(t) = \left\{\begin{array}{cl}
s_j(t) + \varepsilon(t + t_j) & \text{if } t \in \left[-\sfrac{1}{2}, \sfrac{1}{2}\right]\\
0 & \text{otherwise} \end{array}\right.\label{noisywave}
\end{gather}
for all $j \geq 1$ (see Figure~\ref{Afib1}), where the set $\{s_j\}_{j\geq 1}$ has been dynamically sampled from $\mathcal{M}$ according to some never-zero and sufficiently smooth probability density function, {and $\varepsilon$ is a mean zero random process with finite variance that is independent of $s_j$}.  Our goal is to obtain a smooth {map} $\rho \colon {L^2(\mathbb{R})} \rightarrow \mathbb{R}^{d}$ for some ${d} > 0$ such that {$\rho|_{\mathcal{M}}$ is close to an isometric embedding of $\mathcal{M}$, and}
\begin{gather}\label{about rho 2nd condition}
\Vert \rho(x_i) - \rho(x_j) \Vert \approx {d_g}(s_i, s_j)
\end{gather}
for all pairs of integers $i$ and $j$, 
where $\Vert \, \cdot \, \Vert$ denotes the {Euclidean} norm, {and $d_g$ denotes the geodesic distance induced by the Riemannian metric $g$ on $\mathcal{M}$.} {In other words, we hope to find a map $\rho$ that is robust to the inevitable noise $\varepsilon$ and is able to recover the geometric and topological information of $\mathcal{M}$.} Note that our use of the word {\em embedding} implies that the topology and differential structure of $\mathcal{M}$ has been preserved. 
{On the other hand, $d$} should ideally be small.
The low-dimensionality of {$\mathbb{R}^d$} is desirable because it guarantees that the observed dynamics can be studied from a compressed viewpoint. The robustness and metric-preserving properties of $\rho$ {together} guarantee that $\mathcal{M}$ is maintained as accurately as possible.
{With this embedding,} the dynamical process {$T$} {lying on the surface of $\mathcal{M}$ (see Figure~\ref{Figure:wave-shape oscillatory model})} 
can be studied by analyzing a collection of $d$ {time series associated with $\rho$: 
\begin{gather}
\{\mathbf{e}_k^\top\rho(x_j)\}_{j\geq 1},
\end{gather}
where $1 \leq k \leq d$, and $\mathbf{e}_k \in \mathbb{R}^{d}$ is a {$d$}-dimensional unit vector with a $1$ in the $k$-th entry}. In most cases, we cannot expect {$d$} to be the dimension of $\mathcal{M}$, and while we do not obtain the complicated dynamical process $T$, each of the {$d$} resulting time series allows us to study $T$ from a different viewpoint.}

To {achieve the above goal}, we turn to the field of unsupervised manifold learning.
We apply the well-established DM {algorithm} \cite{coifman2006diffusion} {to the set $\mathcal{X}$}.
The DM algorithm {is a} nonlinear approach to obtaining an approximate isometry of the data manifold using only a discrete set of samples. 
{From the machine learning perspective,} the DM algorithm is a kernel-based {spectral} method designed to manage abstract data sets for which only local information is known. It maintains several advantages over linear methods when dealing with nonlinear manifolds embedded in Euclidean space, {and} theoretical developments behind the DM algorithm have been extensive. 

{The design of the DM algorithm} is based on {the following fundamental result from spectral geometry. For} a smooth and closed Riemannian manifold, the eigenfunctions {and eigenvalues} of its Laplace-Beltrami operator can be used to {construct a map that} almost-isometrically embeds the manifold in {$\ell^2$} \cite{Berard_Besson_Gallot:1994}. (This mapping is called the {\em spectral embedding}.) 
{Consequently, the geodesic distance between two close points on the manifold can be well-approximated by the $\ell^2$ distance between their images in the spectral embedding.}  
Since the DM algorithm relies on discretely approximating eigenfunctions of the Laplace-Beltrami operator using eigenvectors of a matrix called the graph Laplacian (GL), the spectral convergence of the GL to the Laplace-Beltrami operator and the associated rate of convergence have been studied \cite{trillos2018error,dunson2019diffusion}. 
Moreover, {due to the intrinsic random matrix nature of the GL, DM enjoy a noise-robustness} property that is exceptionally useful for high-dimensional, real-world data \cite{ElKaroui:2010a,ElKaroui_Wu:2016b}. This random matrix property has recently been extensively studied; see \cite{ding2020phase} and the citations therein. 
{With the above facts combined, we can apply DM to robustly and accurately estimate the geodesic distance between any pair of close points on the data manifold, and this estimation is referred to as the diffusion distance. As a result, DM provide a solution to finding a map $\rho$ mentioned in \eqref{about rho 2nd condition}.} 
Other contributions are discussed in the supplementary material. 

\subsection{Diffusion maps} {In this subsection, we review the steps of the DM algorithm that we intend to apply to the set of a time series' oscillatory cycles in order to recover its associated wave-shape manifold.}
Take a point cloud $\mathcal{X}:={\{x_i\}_{i=1}^N}\subset \mathbb{R}^p$. 
Construct an $N\times N$ affinity matrix $W$ so that
{\begin{align}\label{Definition W matrix}
W_{ij}=e^{-\frac{\|x_i-x_j\|^2}{h}}
\end{align}
for all $1 \leq i, j \leq N$}, where the bandwidth $h>0$ is chosen by the user. {When} the signal-to-noise ratio is low, it is beneficial to set $W_{ii}=0$ \cite{ElKaroui_Wu:2016b}. {To} simplify the discussion, we use the radial basis function to design the affinity, but we are free to choose a more general kernel. Moreover, DM can be defined on a point cloud in a general metric space, but we focus on compact subsets of Euclidean space. 
{Given $0 \leq \alpha \leq 1$, define the $\alpha$-normalized affinity matrix by
\begin{gather}
W^{(\alpha)} = D^{-\alpha} W D^{-\alpha},
\end{gather}
where $D$ is a diagonal $N\times N$ matrix given by
\begin{equation}\label{degree_matrix1}
D_{ii}=\sum_{j=1}^NW_{ij}.
\end{equation}
By choosing $\alpha = 1$, the influence of the probability density function is negated \cite{coifman2006diffusion}. }
Define {the diffusion operator by}
\begin{align}\label{Definition:Atransition}
{A:=\left(D^{(\alpha)}\right)^{-1}W^{(\alpha)},}
\end{align}
where {$D^{(\alpha)}$} is a diagonal $N\times N$ matrix given by
\begin{equation}\label{degree_matrix}
{D^{(\alpha)}_{ii}=\sum_{j=1}^NW_{ij}^{(\alpha)}}
\end{equation}
{for all $1 \leq i \leq N$.}
We call $D^{(\alpha)}$ the {degree matrix} associated with $W^{(\alpha)}$.
{Then $A$ is diagonalizable and has real eigenvalues because it} is similar to the symmetric matrix 
\begin{gather}
{P = \left(D^{(\alpha)}\right)^{-1/2}W^{(\alpha)}\left(D^{(\alpha)}\right)^{-1/2}.}
\end{gather} Write $\phi_1, \phi_2,...,\phi_N\in \mathbb{R}^N$ for the eigenvectors of $A$, and write the corresponding eigenvalues as 
\begin{gather}
{1 = \lambda_1 \geq \lambda_2\geq\ldots\geq \lambda_N> 0.}
\end{gather} {We} know that $\phi_1 = \begin{bmatrix} 1 & \cdots & 1 \end{bmatrix}^\top\in \mathbb{R}^n$ {because the sum of each row of $A$ is one. } {Moreover,} {$\lambda_1>\lambda_2$} when the graph associated with $W^{(\alpha)}$ is connected}. {Finally, $\lambda_N> 0$ comes from the fact that the chosen kernel is positive definite (by {Bochner's} theorem \cite{Gelfand:1964}).}
{The $\alpha$-normalized DM (with diffusion time $t > 0$ and truncated to $0 < d < N$ dimensions) is defined as
\begin{equation}\label{DM}
\Phi^{(d)}_t:x_j \mapsto \left(\lambda_2^t\phi_2(j), \lambda_3^t\phi_3(j),...,\lambda_{d+1}^t\phi_{d+1}(j)\right)\in \mathbb{R}^{d},
\end{equation}
for all $1 \leq j \leq N$.} Note that {$\lambda_1$ and $\phi_1$} are ignored in the embedding {because} they are not informative. No universal rule exists guiding the choice of $d$. The choice depends on the problem at hand and can be obtained by optimizing some quantities of interest while paying mind to limitations imposed by the structure of the manifold \cite{Portegies:2015}. {We call $\|\Phi^{(d)}_t(x_i)-\Phi^{(d)}_t(x_j)\|_{\mathbb{R}^d}$ the {\em diffusion distance} between $x_i$ and $x_j$ \cite{coifman2006diffusion}.}

\subsection{Proposed DDmap algorithm for dynamics recovery}\label{Sec: proposed algorithm}
{
Our novel algorithm for exploring the dynamics encoded by a time series adhering to the wave-shape oscillatory model is as follows. {We coined the algorithm {\em Dynamic Diffusion map} (DDmap).} The algorithm is composed of four main steps, and the details of each step depend on the physiological time series that is under consideration. Suppose the time series is sampled uniformly at $f_s$ Hz for $T>0$ seconds; that is, there are in total $n:=\lfloor T\times f_s\rfloor$ sampling points. Denote the discretized time series as $\mathbf{x}\in\mathbb{R}^n$.  

\begin{enumerate}
\item
Apply any suitable beat tracking algorithm (such as \cite{malik2020adaptive} {if $f$ is an ECG signal)} to determine the locations of all oscillatory cycles in $\mathbf{x}$. Suppose there are $N$ resulting cycles. 
Denote the location (in samples) of the $i$-th oscillation as $t_i$, where $t_1<t_2<\ldots<t_N$.

\item
Extract the individual oscillations from $\mathbf{x}$. Denote the $i$-th oscillatory cycle as $\mathbf{x}_i\in \mathbb{R}^p$, where $p>0$ depends on the frequency of the oscillation. Note that these segments may overlap, in general. 
Define $\mathcal{X}:=\{\mathbf{x}_i\}_{i=1}^N\subset \mathbb{R}^p$.
Based on our model, $\mathcal{X}$ is a set of points sampled (possibly with noise) from the wave-shape manifold $\mathcal M$ associated with $\mathbf{x}$.

\item
Apply the DM algorithm to the set $\mathcal{X}$ in order to recover the manifold $\mathcal{M}$. Write $\Phi_t^{(d)} \colon \mathcal{X} \rightarrow \mathbb{R}^d$ for the truncated time-$t$ DM, where $t, d > 0$ are chosen appropriately. 

\item Recover the dynamics on the wave-shape manifold by constructing the time series $\sfrac{t_i}{f_s} \mapsto \Phi_t^{(d)}(i)$. Optionally, interpolate this time series over the duration $T$ of the original recording. Note that when we recover the wave-shape manifold (prior to visualizing the dynamics of interest), we forget the temporal relationship among points in {$\mathcal{X}$}. After the wave-shape manifold is recovered, {the} temporal relationship {is reconsidered in order to study the} dynamics. 
\end{enumerate}}
\noindent By performing these four steps, we are making estimates for the landmark locations $\{t_j\}_{j\geq 1}$, the noisy wave-shape functions $\{x_j\}_{j \geq 1}$ \eqref{noisywave}, and the wave-shape manifold $\mathcal{M}$. However, we do not explicitly estimate the wave-shape functions $\{s_j\}_{j\geq 1}$. 

We take this opportunity to elaborate on the extraction of individual oscillations from $\mathbf{x}$. One strategy is to delineate oscillatory cycles by first fixing universal constants $a, b > 0$ and then setting
\begin{gather}
\mathbf{x}_i(j+a+1) = \mathbf{x}(t_i + j)
\end{gather}
for all $-a \leq j \leq b$. Due to the non-constant period of oscillation, a choice of $\sfrac{a + b + 1}{f_s}$ which exceeds the minimum period observed in the time series will result in some $\mathbf{x}_i$ containing information from $\mathbf{x}_{i-1}$ or $\mathbf{x}_{i+1}$. If $\sfrac{a + b + 1}{f_s}$ does not meet the maximum period observed in the time series, then it may happen that $\mathbf{x}_i$ represents only a portion of the full oscillation. In general, the cycles $s_i$ and $s_{i+1}$ can overlap, which is frequently the case when analyzing ECG signals featuring cardiac arrhythmia or the majority of pulse waveforms.  Our approach will be to ignore regions in which overlapping occurs by making smaller choices of $a$ and $b$.

\section{Application -- Atrial fibrillation in the ECG signal}\label{Sect:Afib}

Atrial fibrillation (Af) is a cardiac arrhythmia associated with heart failure and stroke \cite{bordignon2012atrial}. During Af, ventricular contractions occur irregularly and lack P waves (see Fig.~\ref{Afib1}). The atria activate rapidly and persistently, generating a noise-like component in the ECG signal called the fibrillatory wave ($f$-wave). Patients with Af frequently experience ventricular ecotopy: premature heartbeats initiated by abnormal triggering points in the ventricle wall and not by the sinoatrial node.  
In the ECG signal, premature ventricular contractions (PVCs) appear to occur earlier than expected, lack P waves, and have relatively wide QRS complexes. 
With a forecasted 8 million cases in the United States by 2050, Af is the most common sustained cardiac rhythm abnormality \cite{netter3atrial,NACCARELLI20091534}. Regularly encountering Af in ECG analysis is inevitable. Modeling an ECG signal featuring Af is difficult because of the irregular heart rate and the commonly-encountered PVCs that cause transient changes in ventricular complex morphology. As such, the existing mathematical techniques {are limited} for extracting dynamical information from ECG signals featuring Af. We use this opportunity to showcase the effectiveness of the wave-shape oscillatory model; we show that the proposed algorithm for dynamics recovery can yield physiologically relevant information from an ECG signal featuring Af.  
\begin{figure}[h!]
\centering
\includegraphics[width=.86\textwidth]{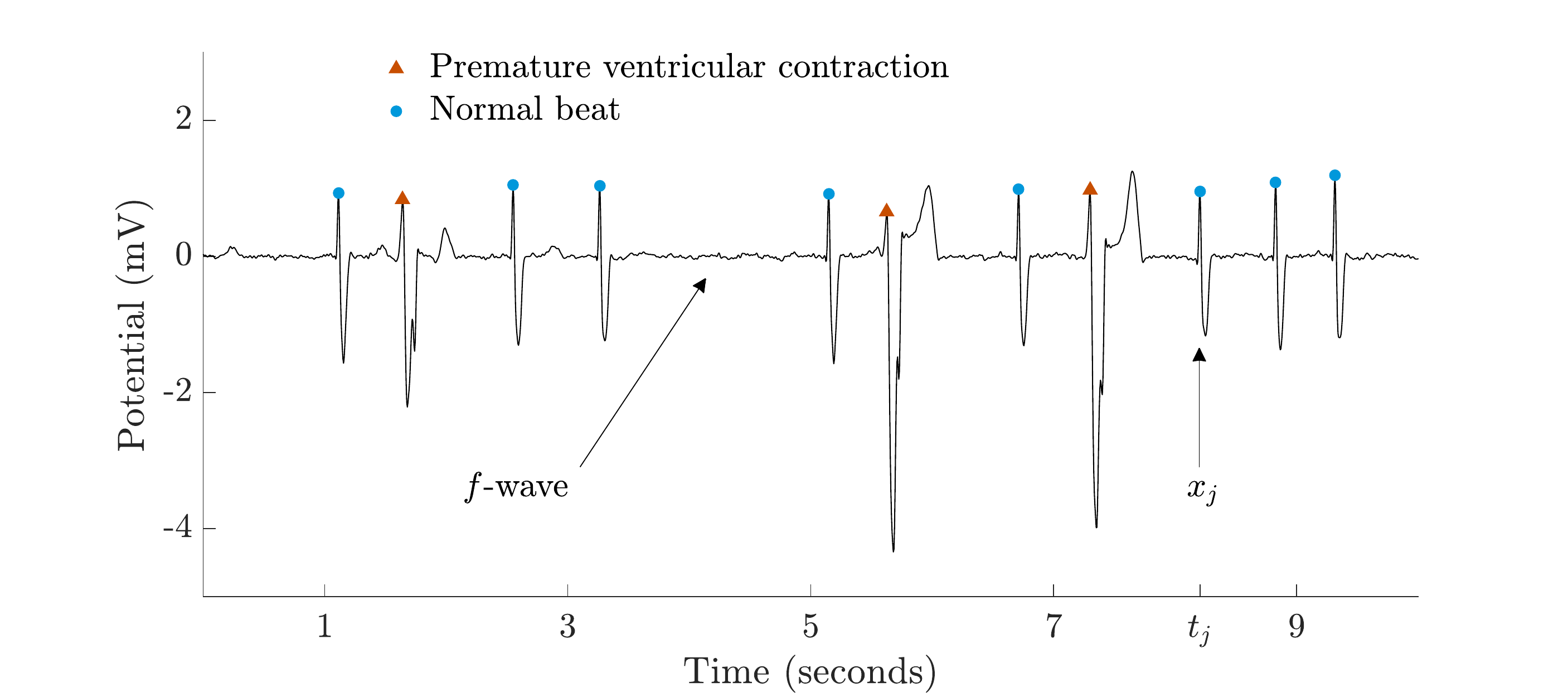}
\caption{We show an ECG signal featuring atrial fibrillation. This recording also features premature ventricular contractions (PVCs). This time series is difficult to model because the ventricular contractions appear irregularly. Moreover, the shape of each PVC is significantly different than the shape of its preceding normal beat.}\label{Afib1}
\end{figure}

\subsection{Methods and Results} 
The ECG signal used in this demonstration (of which Figure~\ref{Afib1} is an excerpt) is taken from an overnight polysomnogram (PSG) recorded from a 61-year-old male at the sleep center in Chang Gung Memorial Hospital (CGMH), Linkou, Taoyuan, Taiwan. The Institutional Review Board of CGMH approved the study protocol (No. 101-4968A3). The Alice 5 data acquisition system (Philips Respironics, Murrysville, PA) was used. Among the available channels, we consider only the first of two ECG leads and the \texttt{CFLOW} signal (a nasal cannula-based respiratory signal), which were both sampled at 200 Hz. We manually selected a 35-minute section of the recording which we determined to contain no motion artifacts.
Measurement noise and powerline interference in the ECG signal are suppressed by applying a bi-directional, third-order Butterworth lowpass filter with a cutoff frequency of $40$ Hz.  
Baseline wandering is estimated and subsequently removed using a two-step process \cite{de2004automatic}. First, we apply a median filter with a window size of 200 ms to the ECG signal. Then, we apply a median filter with a window size of 600 ms to the output of the previous median filter. 
We detect the QRS complexes in the ECG signal by applying a standard high-accuracy QRS detector \cite{malik2020adaptive}. 
Write the pre-processed ECG signal as a vector $\mathbf{x} \in \mathbb{R}^{n}$, where $n= f_s \times T $ is the number of samples, $f_s= 200$ Hz is the sampling rate of the signal, and $T = 2155$ is the duration of the recording in seconds. We find $N=2676$ R peaks in the ECG signal.
Write the location in samples of the $i$th R peak as $r_i$. 
\if false
For the purpose of analyzing the traditional EDR signal, the S peak following each R peak is determined by
\begin{gather}
s^{(k)}_i = {\arg \min}_{r^{(k)}_i +1 \leq t \leq r^{(k)}_i +60} \mathbf{E}_k( t)\, ,
\end{gather} 
where $60$ ms is chosen according to normal electrophysiology.
\fi
By defining a window surrounding each R peak that extends $80$ ms to the left and $400$ ms to the right, we are able to excise each cardiac cycle.  Define
\begin{gather}
\mathbf{X}(i,j+\lfloor 0.08 \times f_s \rfloor) := \mathbf{x}\left( r_i + j \right)
\end{gather}
for all ${-}\lfloor 0.08 \times f_s \rfloor \leq j\leq \lfloor 0.4 \times f_s \rfloor$.
Clearly, ${\mathbf{X}(i,j)}\in \mathbb{R}^{N\times p}$, where $p = 96$. 
Note that the row-ordering of $\mathbf{X}$ subliminally encodes the temporal information. 

{The DDmap} algorithm embeds {the ECG signal} (whose sampled points are the rows of $\mathbf{X}$) into $\mathbb{R}^{32}$ {via $\Phi_t^{(32)}$, where we} use a diffusion time of $t = 10$ and a normalization factor of $\alpha = 1$. The bandwidth parameter $h$ {in \eqref{Definition W matrix}} is set to be the first quartile of the set
\begin{gather}
\left\{ \sum_{k=1}^p \left[ \mathbf{X}(i, k) - \mathbf{X}(j, k) \right]^2 : 1 \leq i , j \leq N\right\}.
\end{gather}
We obtain a matrix $\mathbf{E} \in \mathbb{R}^{N \times 32}$ whose rows are the embedded cardiac cycles; that is, 
\begin{gather}
\operatorname{row}_i \mathbf{E} = \Phi_t^{(32)}\left( \operatorname{row}_i \mathbf{X} \right).
\end{gather}
To compress the available dynamical information, we obtain a low-rank approximation of $\mathbf{E}$. Specifically, we set $\mathbf{U} \in \mathbb{R}^N$ to be the top left-singular vector of $\mathbf{E}$. We can view $\mathbf{U}$ as a time series whose value at $\sfrac{r_i}{f_s}$ seconds is $\mathbf{U}(i)$. In Figure~\ref{Fig:MorphDyn}, we see that $\mathbf{U}$ encodes coarse morphological differences between heartbeats, serving as an unsupervised detector of {PVCs}. We can split the set of heartbeats into two clusters using $\mathbf{U}$.  We define
\begin{gather}
C_1 = \{ 1 \leq i \leq N : \mathbf{U}(i) \geq 0\} \quad C_2 = \{1 \leq i \leq N : \mathbf{U}(i) < 0\}.
\end{gather}
We assume that the larger set contains the normal heartbeats that were triggered by the sinoatrial node. In our case, we found that $|C_1| < |C_2|$ and that non-negative values of $\mathbf{U}$ were indicative of ventricular ectopy in the ECG signal. Ventricular ectopy detection is a well-studied discipline that aims to automate the search through ambulatory ECG recordings for irregular heartbeats known to be associated with conditions such as cardiomyopathy and congestive heart failure \cite{netter3ventricular,latchamsetty2016premature,bikkina1992prognostic,FELDMAN1970666,SWENNE1973150}. 
We visualize the clustering of heartbeats in Figure~\ref{Fig:ClusterBeats}.
\begin{figure}[h!]
\centering
\includegraphics[width=\textwidth]{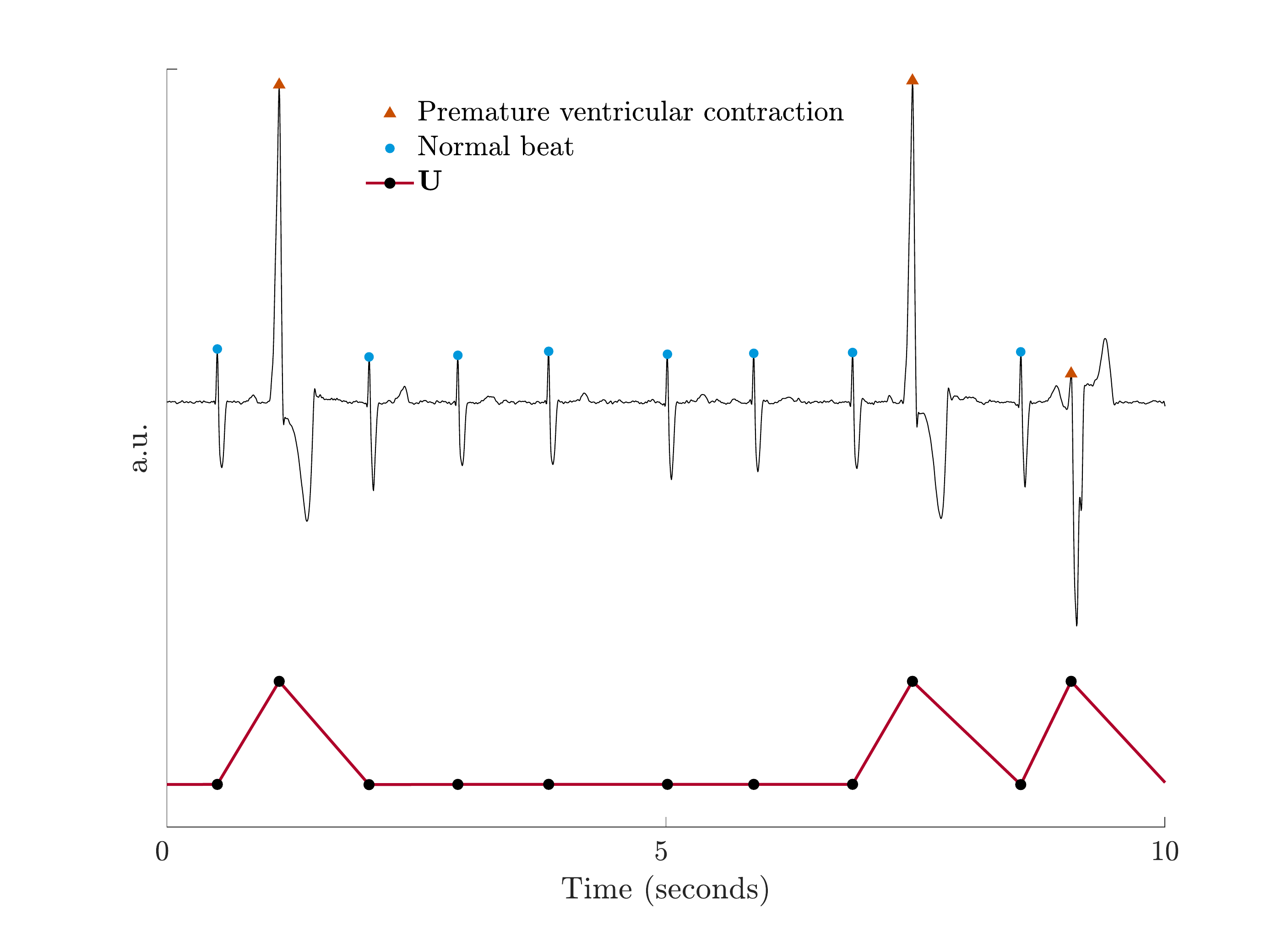}
\caption{The time series $\mathbf{U}$, representing transitions between connected components of the wave-shape manifold, corresponds to coarse changes in heartbeat morphology.}\label{Fig:MorphDyn}
\end{figure}
\begin{figure}[h!]
\centering
\includegraphics[width=\textwidth]{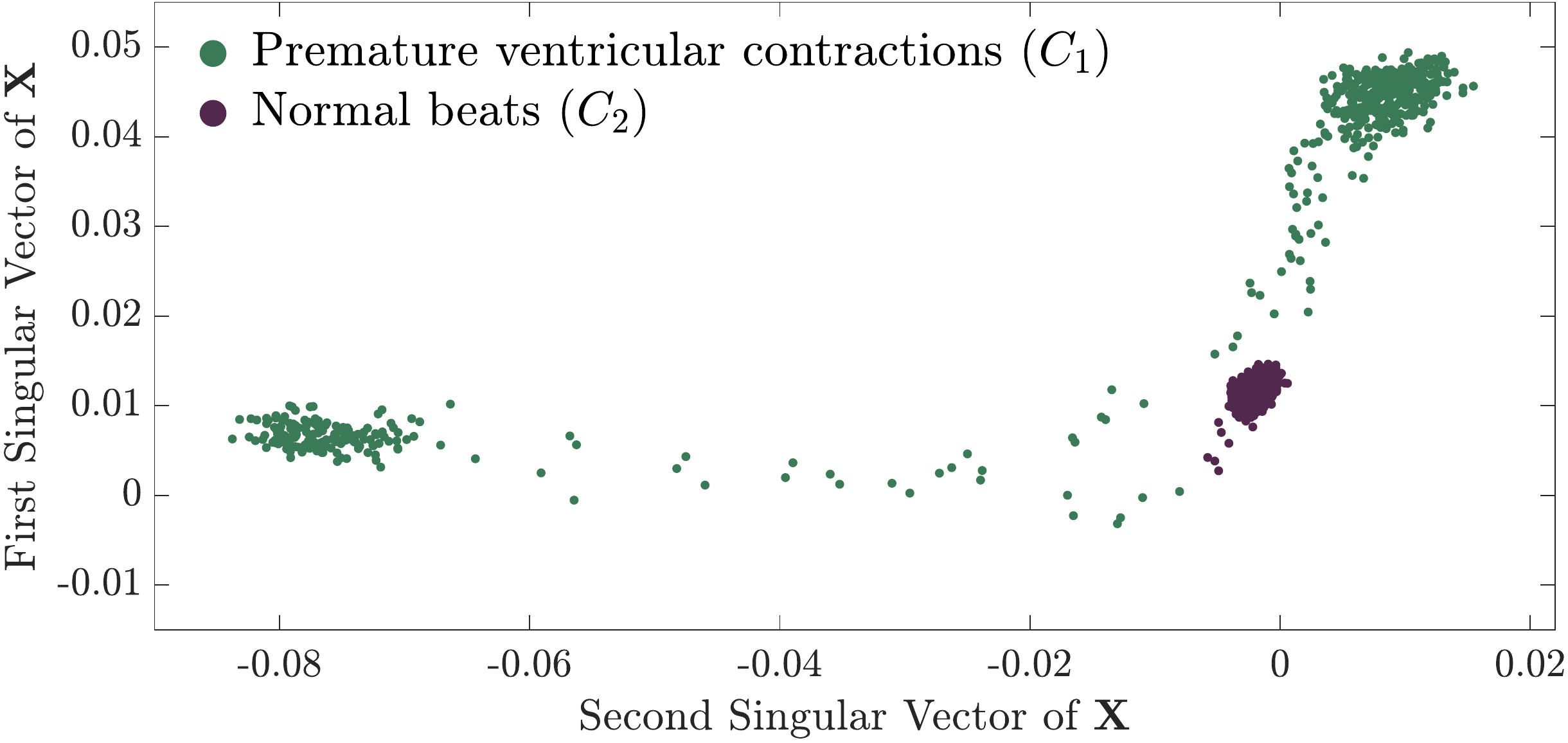}
\caption{A low-rank approximation of $\mathbf{X}$ through the singular value decomposition yields a visualization of the set of cardiac cycles in the ECG signal. Heartbeats belonging to the set $C_1$ are colored green, and those belonging to the set $C_2$ are colored purple. The set $C_2$ contains 2113 heartbeats, and the set $C_1$ contains $563$ heartbeats.}\label{Fig:ClusterBeats}
\end{figure}

We can now examine wave-shape modulation among those heartbeats not belonging to the ectopic class. We expect the normal heartbeats in the ECG signal to lie on a distinct connected component of the wave-shape manifold. A recovery of this component is simply a submatrix $\mathbf{F} \in \mathbb{R}^{|C_2| \times 32}$ of $\mathbf{E}$. 
Write $\mathbf{V} \in \mathbb{R}^{\lfloor 4T  \rfloor}$ for the piece-wise cubic spline interpolation of $\operatorname{col}_1\mathbf{F}$ to a regular sampling rate of 4 Hz, recalling that $\mathbf{F}(i, 1)$ is the value taken by the time series during the $i$-th {\em normal} heartbeat. A normalization procedure makes it easier to detect $\mathbf{V}$'s oscillatory behavior:
\begin{gather}
\widehat{\mathbf{V}}(i) := \frac{\mathbf{V}(i) - \frac{1}{21}\sum_{j=-10}^{10} \mathbf{V}(i+j)}{\sum_{k=-10}^{10} \left[ \mathbf{V}(k) - \frac{1}{21}\sum_{l=-10}^{10} \mathbf{V}(k+l)  \right]^2 }
\end{gather}
We show $\widehat{\mathbf{V}} \in \mathbb{R}^{\lfloor 4T \rfloor}$ in Figure~\ref{Fig:DynResp} above the simultaneously recorded \texttt{CFLOW} signal from the PSG. The signal $\widehat{\mathbf{V}}$ is evidently capturing the changing state of the respiratory system.  This result is to be expected; respiratory volume is known to be related to the amplitude modulation of the ECG signal. When the lung is full of air, the ECG electrode moves further from the heart, and thoracic impedance increases, causing the amplitude of the recorded ECG signal to decrease. On the other hand, when the lung is empty, the ECG electrode moves closer to the heart, and thoracic impedance decreases \cite[Chapter 8]{Clifford:2006:AMT:1213221}. This relationship between respiratory volume and the amplitude of the ECG signal has lead to the design of {\em ECG-derived respiration (EDR)} signals \cite{helfenbein2014development,varon2020comparative}. In the case of Af, these algorithms are challenged by the presence of the $f$-wave and any ectopic beats.  However, our natural, bottom-up approach based on the wave-shape oscillatory model successfully obtains the desired dynamical information from an ECG signal in the presence of both of these obstacles. 
\begin{figure}[h!]
\centering
\includegraphics[width=\textwidth]{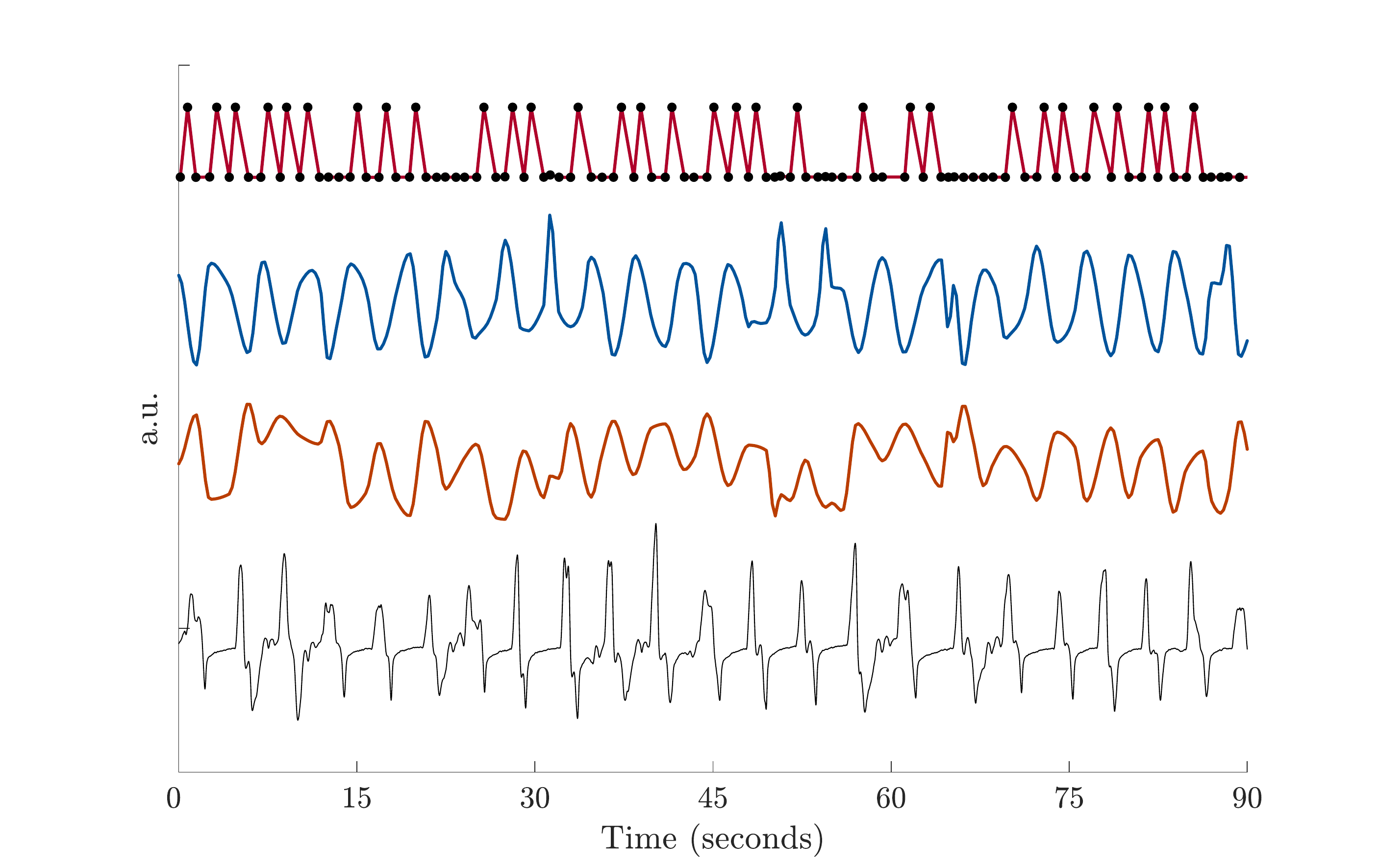}
\caption{The time series $\widehat{\mathbf{V}}$ (blue) correlates well with the simulataneously recorded respiratory signal (bottom). The traditionally acquired EDR signal (based on measuring the RS heights of non-PVC beats) shown in orange appears to provide less reliable respiratory information. For reference, we show the time series $\mathbf{U}$ (top) whose peaks indicate the presence of ectopic heartbeats in the ECG signal.}\label{Fig:DynResp}
\end{figure}

\section{Discussion}\label{sec:discussion}

In this paper, we propose a novel {{\em wave-shape manifold}} to model the time-varying morphology of cycles commonly observed in oscillatory physiological time series.
We mention that the proposed wave-shape manifold model has been {implicitly} applied in fetal ECG analysis \cite{SuWu2017}, {$f$}-wave analysis \cite{Malik_Reed_Wang_Wu:2017}, intracranial electroencephalogram analysis \cite{alagapan2018diffusion}, and {as a} novel imaging technique for long-term physiological time series \cite{Wang_Wu_Huang_Chang_Ting_Lin:2019}. This is the first time we provide a systematic discussion of the model.
This wave-shape manifold model is then used to model oscillatory physiological time series; we coin this model the {{\em wave-shape oscillatory model.}}
The wave-shape oscillatory model generalizes the traditional phenomenological model by capturing nonlinear changes in cycle morphology which occur due to all factors and not simply amplitude and frequency modulation. Moreover, it bypasses the troublesome ``slowly varying'' assumption and allows for {rapid changes in cycle shape}.
Given a time series adhering to the wave-shape oscillatory model, we  {propose a novel algorithm (DDmap) to explore the dynamics encoded in its wave-shape modulation}.
The performance and usefulness of the proposed model and algorithm are evaluated on two real physiological time series: the ECG {signal (see Section \ref{Sect:Afib})} and the ABP signal {(see Section \ref{sec:application2}} {in the supplementary material)}. We show that the proposed approach has the potential to obtain clinically relevant information from a non-traditional perspective. {We mention that the proposed model and algorithm could be applied to analyze time series outside the biomedical field.}
Last but not least, it is worth mentioning that, compared with our approach, using {contrived features based on landmarks} to quantify the oscillatory pattern is a dimension reduction step {that is not data-driven} from the machine learning viewpoint {and may} result in the loss of important information. 

\subsection{Clinical application}

In this study, we use two data sets to demonstrate the proposed model and algorithm. {First, we show that applying the proposed model and algorithm to a single-lead ECG signal simultaneously yields ventricular ectopy detection and a new {EDR} signal. This result was obtained on a time series whose frequency and wave-shape were not slowly varying and showcases the effectiveness of the wave-shape oscillatory model for handling pathological ECG signals.
The second database consists of an ABP signal recorded during the endotracheal intubation surgical procedure. Through this database, we can observe the cardiovascular effects of endotracheal intubation as documented by the ABP wave-shape. Even though patients in an unconscious state would not feel pain, the surgical step still elicits nociception in the human body and produces dynamic effects on the circulation system; these changes are encoded as subtle wave-shape modulation in the ABP signal. 
The proposed algorithm has the potential to help quantify noxious stimuli during surgery, and we will report this application in future work.}

{The proposed algorithm can be directly applied to visualize} long-term physiological time series. As technology advances, ultra-long-term monitoring systems {are growing in availability, effectiveness, and popularity} \cite{barrett2014comparison}. {However, complimentary tools for the analysis and visualization of ultra-long-term signals are lacking.} Specifically, it is difficult to directly visualize a 3-hour {long} physiological time series {(such as an ABP signal)} using the current patient {monitoring }system. Most of the time, the transition from one state to another {is delicate and slowly occurring, making it difficult to perceive by directly looking at the time series.}
Our approach, however, has the potential to reveal {long-term} dynamical information by excising the times series' cycles and embedding them into a low-dimensional space via the DM {algorithm} \cite{Wang_Wu_Huang_Chang_Ting_Lin:2019}. The result is a set of information-dense time series recorded at a much lower sampling rate than the original signal. 
Ultimately, we expect to generate a holographic visualization {tool for} long-term physiological time series. While this topic is out of the scope of the current paper, we have begun to explore it in other work \cite{Wang_Wu_Huang_Chang_Ting_Lin:2019}.

\subsection{Comparison with existing manifold models for time series}

It is important to mention that there have been several attempts to study time series by taking a manifold model into account. A well-known approach is Takens' lag map
\cite{Takens:1981}. Suppose the discrete dynamics of interest {are described by a function $X \colon \mathbb{N} \rightarrow \mathcal{M}$, where $\mathcal{M}$ is a  $d$-dimensional manifold, and that we observe these dynamics via an observational function $f\colon \mathcal M\to \mathbb{R}$; that is, we have a time series $Y = f \circ X$.} Takens' lag map is obtained by {constructing a set of vectors $\mathcal{Y}_{p,\tau} := \{y^{(\tau)}_i\}_{i\in \mathbb{N}}$ defined as
\begin{gather}
y^{(\tau)}_i := \begin{bmatrix} Y(i)& Y(i + \tau) &\cdots & Y\left(i+(p-1)\tau \right)\end{bmatrix}^\top\in \mathbb{R}^p,
\end{gather}
{for all $i \in \mathbb{N}$, where $\tau > 0$ and $p> 0$} are chosen by the user.
When $p > 2d$ and mild conditions on $\tau$, $f$, and $X$ are met, the set $\mathcal{Y}_{p,\tau}$ is related to the set $\{X(i)\}_{i\in \mathbb{N}}$ via an embedding of $\mathcal{M}$ into $\mathbb{R}^p$ \cite{Takens:1981,sauer1991,Stark_Broomhead_Davies_Huke:1997}.} 
Usually, the next step is evaluating the {topological invariants and differential structure of $\mathcal{M}$ from $\mathcal{Y}_{p,\tau}$.}
%
%
In our model, however, the manifold comes from {excising} the oscillatory cycles, not Takens' lag map. Note that to recover the wave-shape manifold, we collect oscillatory cycles by taking the temporal location of {landmarks} into account. Due to the {inevitable} non-constant {instantaneous frequency, these} landmarks do not appear regularly {in the time series}. {Assuming the extracted cycles are vectors} of length $p$, {they} form a strict subset of $\mathcal{Y}_{p,\tau}$, {one which is unobtainable using} Takens' lag map. 
{We mention that if we view each oscillatory cycle as one sampling point, it is possible to combine the idea of Takens' lag map with the DDmap algorithm.}

\subsection{Future work}

The proposed model opens several research directions. 
%
{First, we need to estimate $T$, and hence forecast the time series. To this end, we need to} systematically model and explore the {intertwined processes generating $\{t_j\}_{j\geq 1}$ and $\{s_j\}_{j \geq 1}$} in \eqref{Introduction:ANHM1} {that would allow a full quantification of the underlying system and aid in predicting future behavior}. An exploration depends on the signal type and the clinical problem of interest, {and we will approach these problems in} future work.

{From the algorithmic perspective}, {our algorithm requires us to determine at least one landmark for the oscillatory pattern in order to delineate cycles and construct the point cloud associated with the wave-shape manifold.} However, {doing so} might not always be possible. When the {instantaneous frequency} is slowly varying, we may count on the de-shape short-time Fourier transform \cite{lin2018wave} to {demarcate} the cycles. In general, {the problem of cycle delineation is challenging and might depend on the signal type and application. {We mention that it is even possible to directly view slices of the spectrogram or other frequency domain distributions as wave-shape functions of an alternative form. This approach would be helpful for signals like the electroencephalogram (EEG) \cite[Chapter 4]{malik2020geometric}.}
Moreover, while the wave-shape oscillatory model and the proposed {DDmap} algorithm have been shown to work well for several physiological time series, {our approach to fully quantifying wave-shape modulation depends on the assumption that either oscillatory cycles do not overlap, or that we do not need the information contained in the overlapping regions.} We may need a better algorithm when the information {contained} in the overlapping {regions} is critical.}

Third, from the theoretical perspective, the spectral embedding and spectral convergence results we count on to validate the DM algorithm are not complete. For example, while the spectral embedding result helps explain how the DM works, it does not fully explain if, and when, the DM can achieve dimension reduction. Specifically, to the best of our knowledge, there is no result discussing how to estimate $N_E$ {(c.f. \cite[Theorem 5.1]{Portegies:2015})} {from what is known or assumed about the} geometric and topological profile of the manifold. Without this piece of information, we cannot even estimate the number {$d$} \eqref{DM} of eigenvectors needed to reconstruct the manifold.
{To the best of our knowledge}, this direction is relatively empty and needs more exploration.

\section*{Acknowledgments}
The authors gratefully acknowledge Dr. Cheng-Hsi Chang for sharing the database demonstrated in Section \ref{sec:application2} and Dr. Yu-Lun Lo for sharing the database demonstrated in Section \ref{Sect:Afib}. The work of Yu-Ting Lin was supported by the National Science and Technology Development Fund (MOST 107-2115-M-075-001) and the LEAP@Duke program of the Ministry of Science and Technology (MOST), Taipei, Taiwan. 
Hau-Tieng Wu acknowledges the hospitality of the National Center for Theoretical Sciences (NCTS), Taipei, Taiwan during the summer of 2019.

\bibliographystyle{amsplain}

\bibliography{WaveShapeDynamics}

\newpage
\begin{center}
{\large\bf SUPPLEMENTARY MATERIAL}
\end{center}

\section{Application -- Pulse waveform analysis during general anesthesia}\label{sec:application2}

We use a clinical data set to study {the physiological responses} of the cardiovascular system to surgical events. To this end, we analyze {pulse wave-shapes} obtained from the arterial blood pressure (ABP) signal. We study a particular surgical event, namely endotracheal intubation{, in which a tube is fed through the mouth and into the trachea in order to facilitate ventilation of the lungs.} It is well known that this procedure elicits nociception and a response by the autonomic nervous system (ANS). This response includes a change in heart rate and an elevation in both blood and pulse pressure. The change in heart rate corresponds to frequency modulation {(FM)} in the ABP signal, the change in blood pressure corresponds to {a modulating} trend, and the change in pulse pressure corresponds to amplitude modulation (AM). We are interested in detecting wave-shape modulation independently of AM, FM, and trend modulation.  {We will detect and quantify wave-shape modulation by applying the wave-shape oscillatory model and the associated algorithm for dynamics recovery.}

\subsection{Material}

The {ABP signal used in this experiment} was collected in the operating room in Shin Kong Wu Ho-Su Memorial Hospital, Taipei, Taiwan. With the goal of analyzing the dynamic response of the human body to distinct surgical steps, its parent database's collection was approved by the local institutional ethics review boards (Shin Kong Wu Ho-Su Memorial Hospital, Taipei, Taiwan; IRB No.: 20160106R and 20160706R), and written informed consent was obtained from each patient. By using standard patient monitoring instruments {(Philips IntelliVue\texttrademark)}, the data was collected via the third-party software, ixTrend\texttrademark Express ver. 2.1 (ixellence GmbH, Wildau, Germany). The sampling {rate} of the ABP {channel} was 125 Hz. The moment of each surgical step was registered using purpose-made software so that precision of the registration time was less than one second.  

\subsection{Data preparation} 

{First, we upsample the ABP signal (by zero-padding the signal's Fourier transform) to a sampling rate of 2000 Hz to enhance the resolution. The ABP signal is truncated so that it begins 30 seconds before the intubation event and is 360 seconds (6 minutes) in duration. We estimate and remove the trend from the ABP signal by applying a median filter with a window length of $2$ seconds. This step removes the blood pressure information.}
To obtain all pulses from the ABP signal, we detect a landmark on each cycle associated with the {\em pulse wave arrival time} {using a modified version of a systolic wave detection algorithm for photoplethysmography \cite{elgendi2013systolic}}. This landmark is determined by finding the maximum of the first derivative during the ascending phase of the ABP pulse; that is, the point at which the signal ascends fastest.
Suppose the ABP time series is discretized as $\mathbf{A} \in \mathbb{R}^{n}$, where $n=\lfloor f_s \times T\rfloor$, $f_s=2000$ is the sampling rate in Hertz, and $T=360$ is the duration of the recording in seconds. {To remove false pulses and pulses that are corrupted by noise, the $i$-th pulse is discarded if the number of significantly wide local maxima between the $i$-th and $(i+1)$-th pulse wave arrival times is larger than three. In practice, a more sophisticated signal quality index can be considered \cite{elgendi2016optimal}.}  
Suppose that there are $N$ detected pulse arrival points remaining, {and that} their locations {(in samples)} are $\{a_i\}_{i=1}^N$. The interval between two consecutive oscillations is $L_i:=a_{i+1}-a_i$, where $1 \leq i \leq N-1$.  Set $L:=\min_{i=1}^{N-1}L_i$. 
Delineate the cycles in the ABP signal as
\begin{gather}
{\mathbf{X}_i(\lfloor f_s \times 0.08 \rfloor + j+1) := \mathbf{A}( a_i + j ),}
\end{gather}
where $i=1,\ldots,N$ and $-\lfloor f_s \times 0.08 \rfloor \leq j \leq L -\lfloor f_s \times 0.08 \rfloor$. Clearly, $\mathbf{X}_i \in \mathbb{R}^p$, where $p=L+1$. 
Next, we remove the {pulse} pressure information by normalizing each $\mathbf{X}_i$. Set
\begin{equation}
{\overline{\mathbf{X}}_i(j)}:=\sigma_i^{-1}(\mathbf{X}_i(j) - \mu_i)
\end{equation}
for $j=1,\ldots,p$, where $\mu_i$ is the mean of $\mathbf{X}_i$ and $\sigma_i$ is the standard deviation.
Finally, we assemble the collection of ABP pulses as
$\mathcal{X}_{\mathbf{A}} := {\{\overline{\mathbf{X}}_i\}_{i=1}^N}$.
The point cloud $\mathcal{X}_{\mathbf{A}}$ encodes the morphological information of the ABP waveform{,} and the temporal information is preserved in the sequence {$\{a_i\}_{i=1}^N$}. 

\subsection{Result}

We apply the DM algorithm to the set $\mathcal{X}_{\mathbf{A}}$. We pick $\alpha = 1$ and a diffusion time of $t = 1$. We select the bandwidth $h$ to be the 25th percentile of the set $\left\{ \Vert v - v_{40} \Vert^2 : v \in \mathcal{X}_{\mathbf{A}} \right\}$,
where $v_{40}$ represents the $40$-th closest pulse to $v$ from the set $\mathcal{X}_{\mathbf{A}}$ in terms of the Euclidean distance. 
The first two coordinates of the DM provide an intuitive visualization. 
{In Figure \ref{Figure:abp3d_embs},} we show the embedding {of the cycles from the} signal {in} Figure \ref{Figure:pulse}.
In this ABP recording, the stimulus event, endotracheal intubation, occurred around the 30-second mark.
To read the physiological evolution elicited as nociception, we read the trajectory shown in Figure \ref{Figure:abp3d_embs}, {wherein the embedded cycles are colored according to their time of occurrence ($\sfrac{a_i}{f_s}$).}
This subject's trajectory is evidently non-linear.

Before the noxious stimulus, the pulse waveform is stable. The dark blue points in the embedding represent the patient's baseline pulse wave-shape. In response to the stimulus event, the pulse rate increases, causing the width of each wave-shape to decrease. Moreover, a prominent peak appears between the systolic peak and the dicrotic notch (see Figure~\ref{Figure:StackedPulses}). These abrupt and relatively significant changes are reflected in the large distance between the blue points and the cyan points in the embedding. By the 150-second mark, the pulse rate has returned to its baseline value (see Figure~\ref{Figure:ABP_HeartRate}), but the prominent peak between the systole and the dicrotic notch remains. The heart rate then begins to descend below its baseline value, and the prominent peak is shown to depress by the end of the recording.
This finding indicates a nonlinear relationship between the wave-shape and the heart rate. 
It is worth mentioning that since the pulse pressure information has been removed (through normalization), the embedding only reflects the dynamics encoded in {the wave-shape modulation and the frequency modulation}. 

This finding coincides with our physiological knowledge: the stimulus invokes a series of physiological responses {that include} elevated heart rate, vasoconstriction, increased heart contractility, and their subsequent interactions. The impact of the stimulus decays after a while, and the physiological status gradually stabilizes. 
%
In {summation},
this embedding helps visualize {the cardiovascular system's} transient {dynamical response} to the stimulus event {(endotracheal intubation)}{: a combination of vascular wall tension, blood volume, and heart contractility modulations that in turn modulate the pulse wave-shape. Validating the potential of this algorithm for quantifying nociception in the operating room will be investigated in future work.}

\begin{figure}[htb!]
\centering
\includegraphics[width=.8\textwidth]{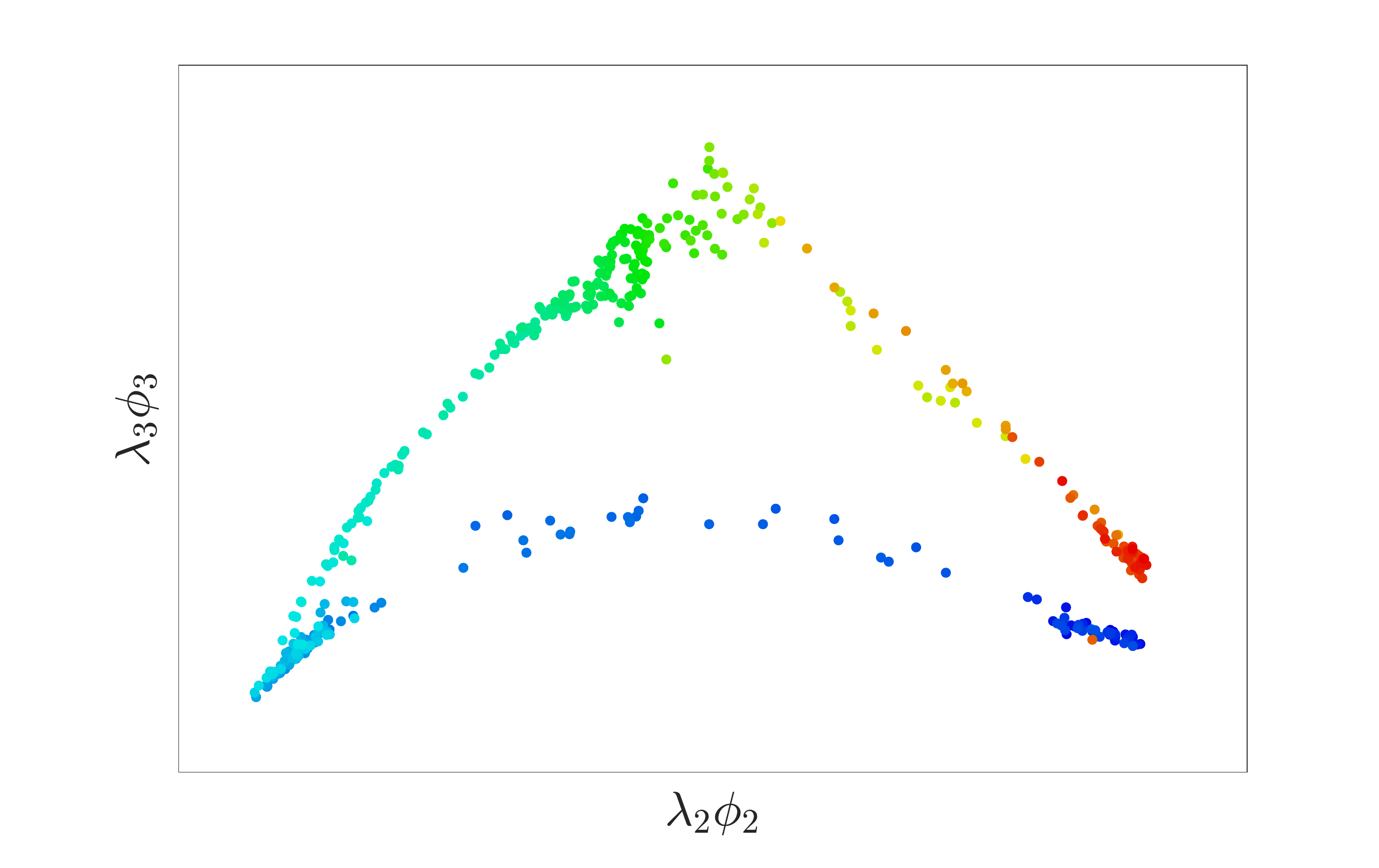}\\
\includegraphics[width=.8\textwidth]{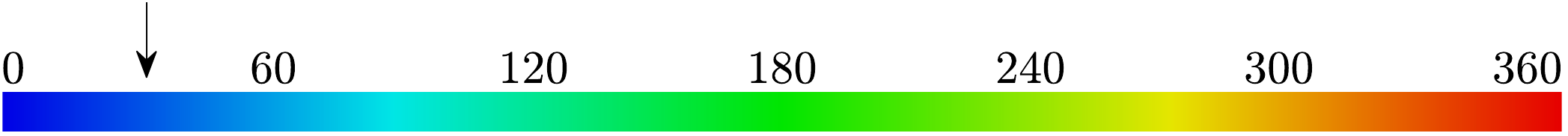}\\
Time (seconds)\\
\caption{We show a two-dimensional embedding of the pulse wave-shapes extracted from a $360$-second arterial blood pressure (ABP) signal. The signal was recorded from a patient undergoing general anesthesia before and after an endotracheal intubation event. The ABP signal is shown in Figure \ref{Figure:pulse}.
The intubation event takes place at the 30th second, indicated by the black arrow on the colorbar.}\label{Figure:abp3d_embs}
\end{figure}

\begin{figure}[htb!]
\centering
\includegraphics[width=.32\textwidth]{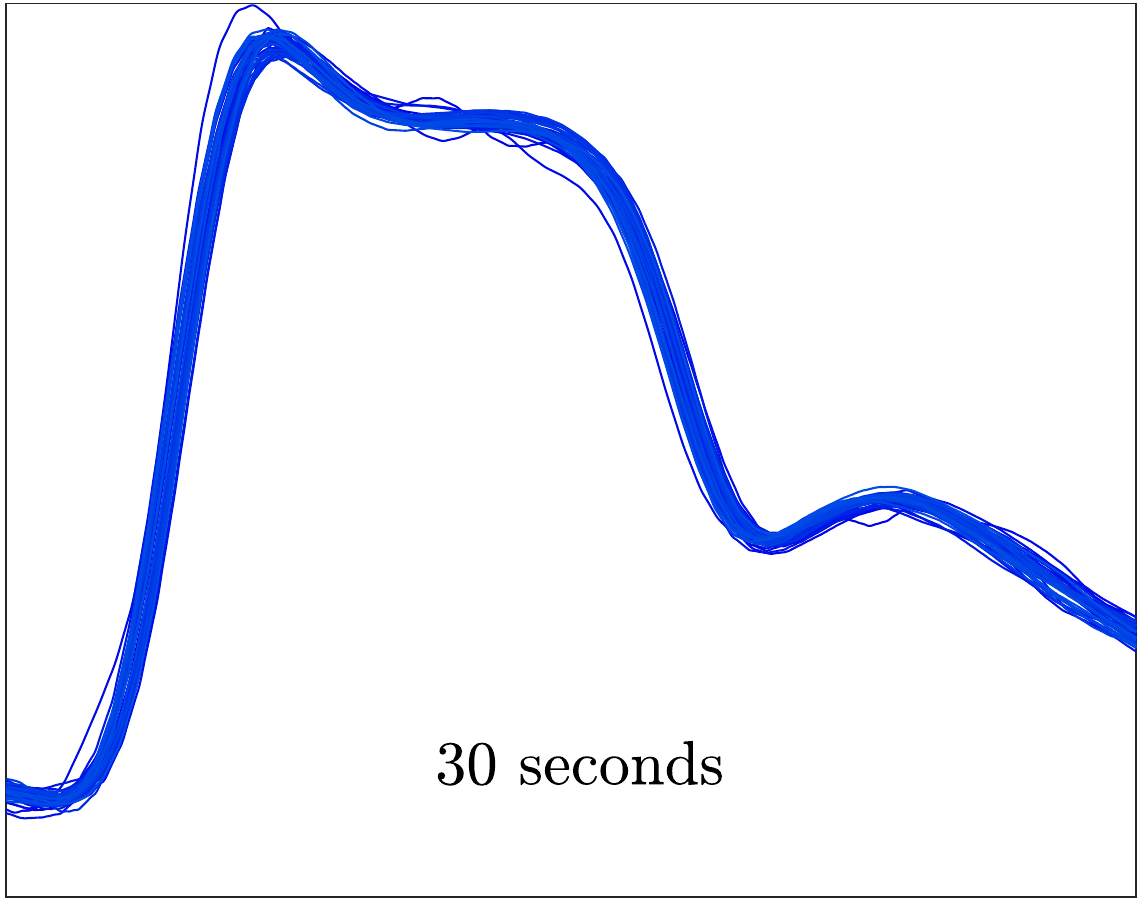}
\includegraphics[width=.32\textwidth]{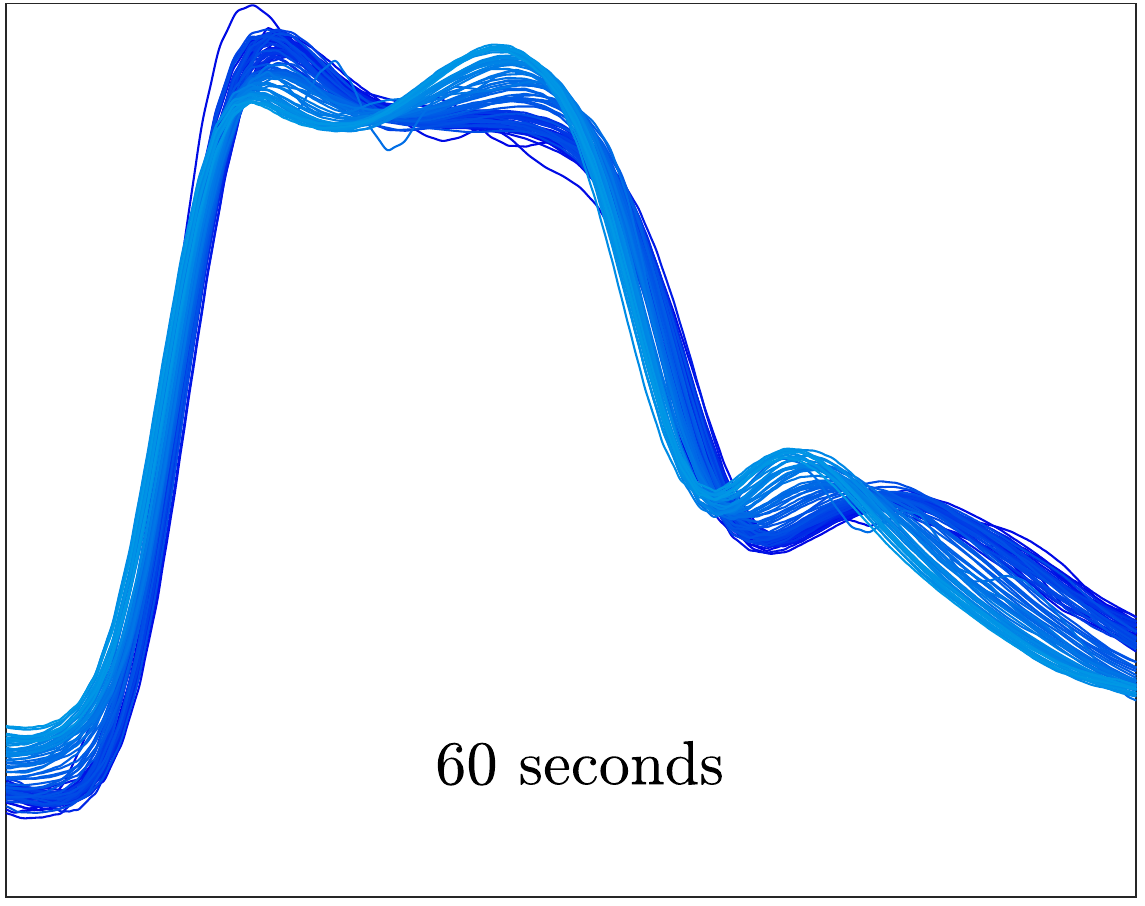}
\includegraphics[width=.32\textwidth]{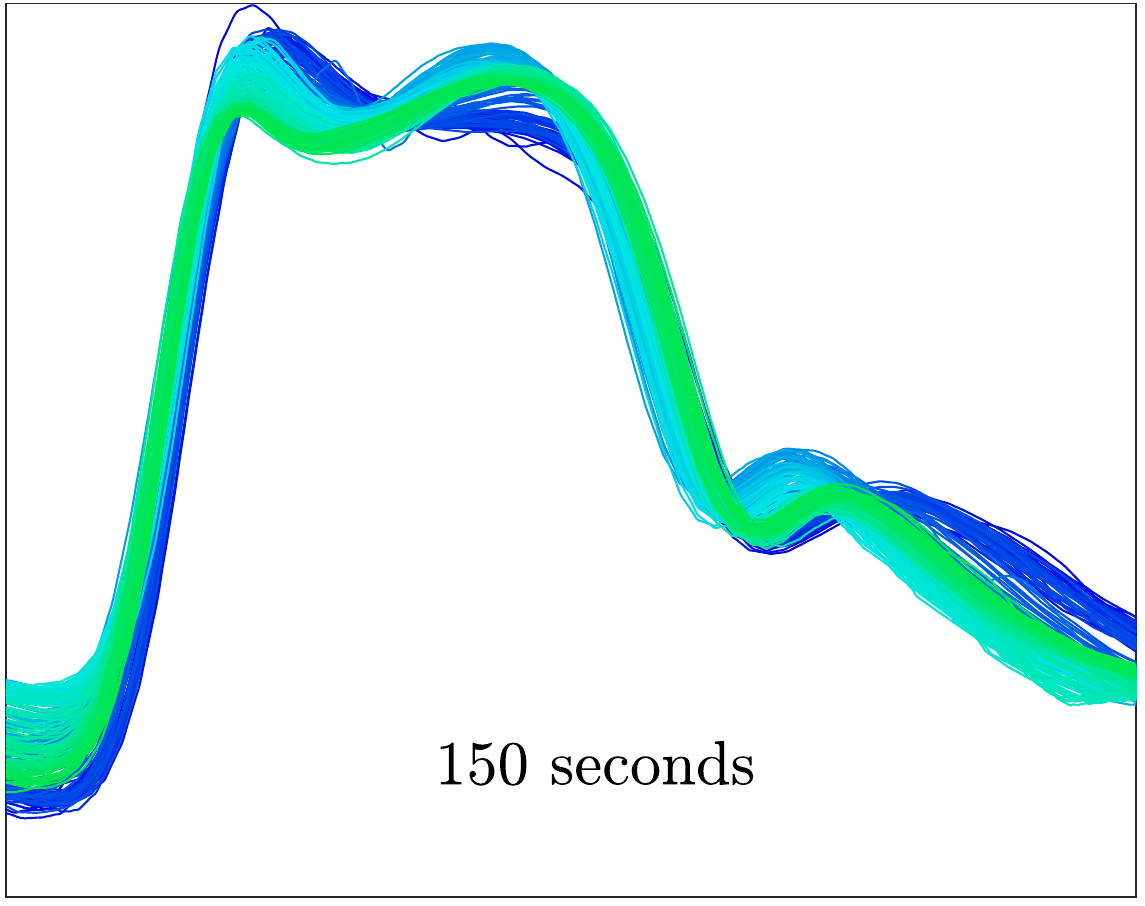}\\
\includegraphics[width=.32\textwidth]{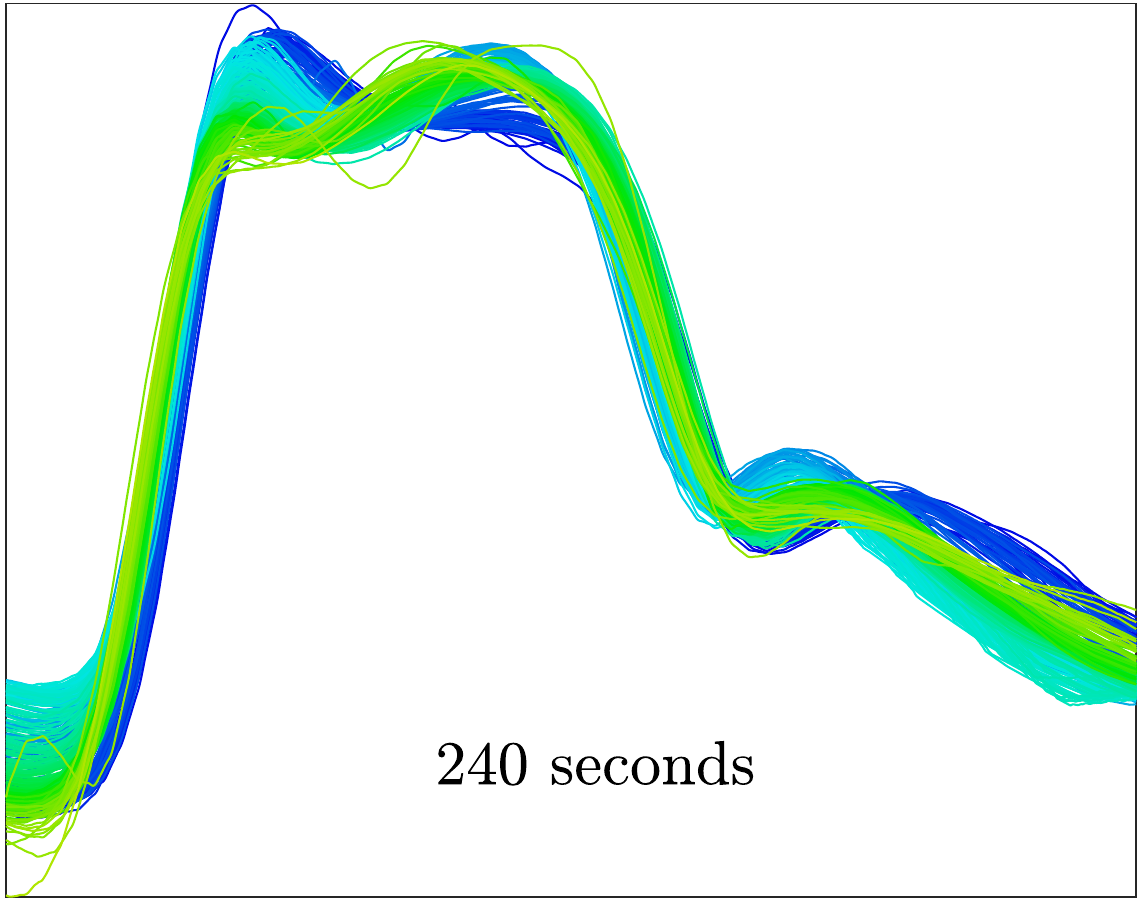}
\includegraphics[width=.32\textwidth]{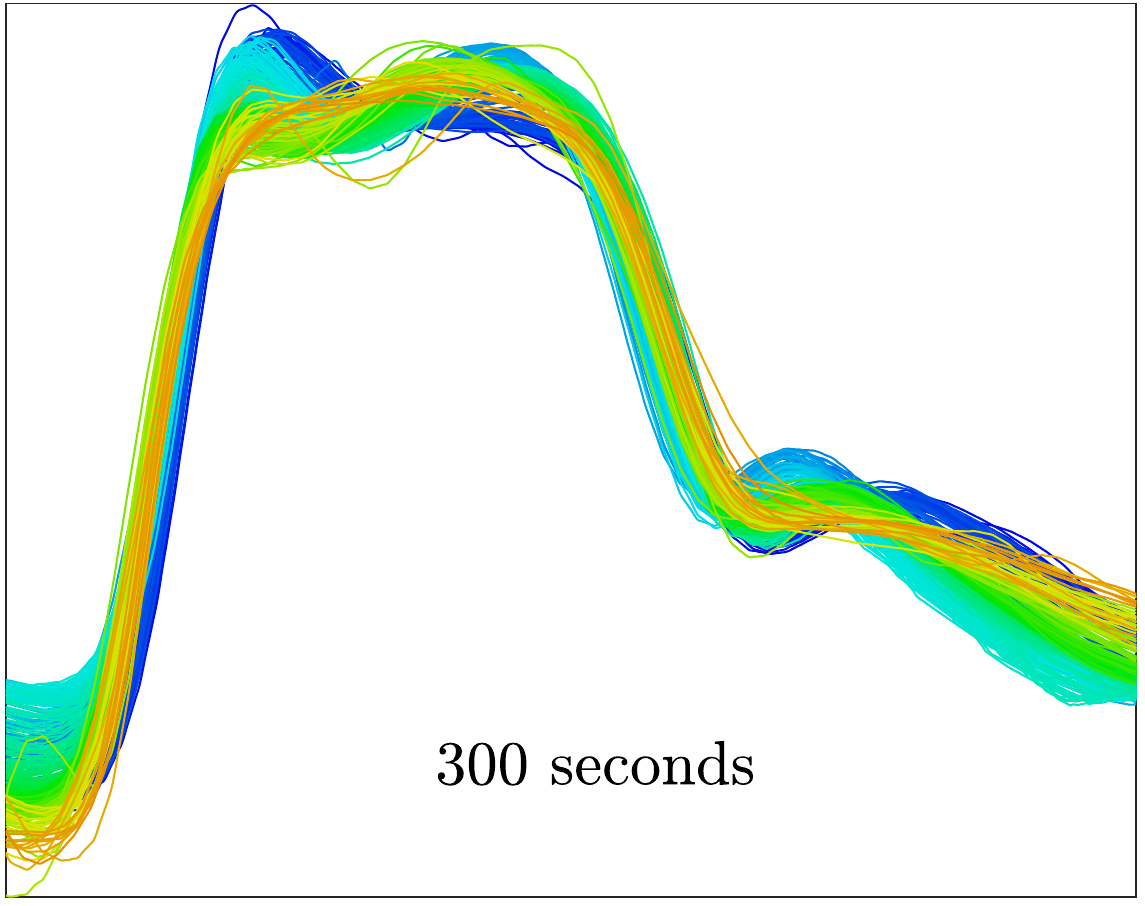}
\includegraphics[width=.32\textwidth]{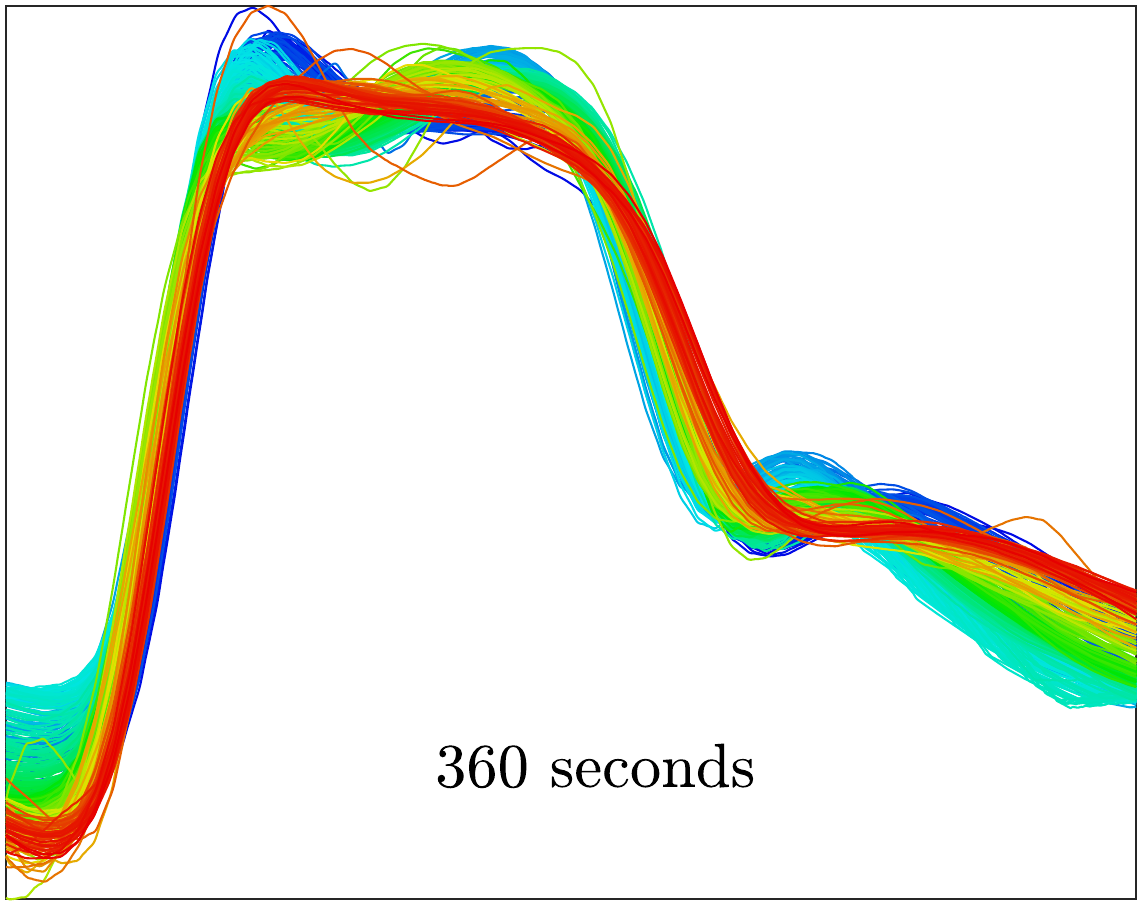}\\
\vspace{0.1in}
\includegraphics[width=.8\textwidth]{ABPColorBar.pdf}\\
Time (seconds)\\
\caption{We superimpose the normalized pulse wave-shapes from the set $\mathcal{X}_{\mathbf{A}}$. After normalization, only frequency (width) and wave-shape modulation effects are perceivable. After the stimulus event at the 30-second mark, the pulses become less wide, and a prominent peak protrudes between the systolic peak and the dicrotic notch. These effects then continue to subtlely evolve throughout the recording.}\label{Figure:StackedPulses}
\end{figure}

\begin{figure}[htb!]
\centering
\includegraphics[width=.8\textwidth]{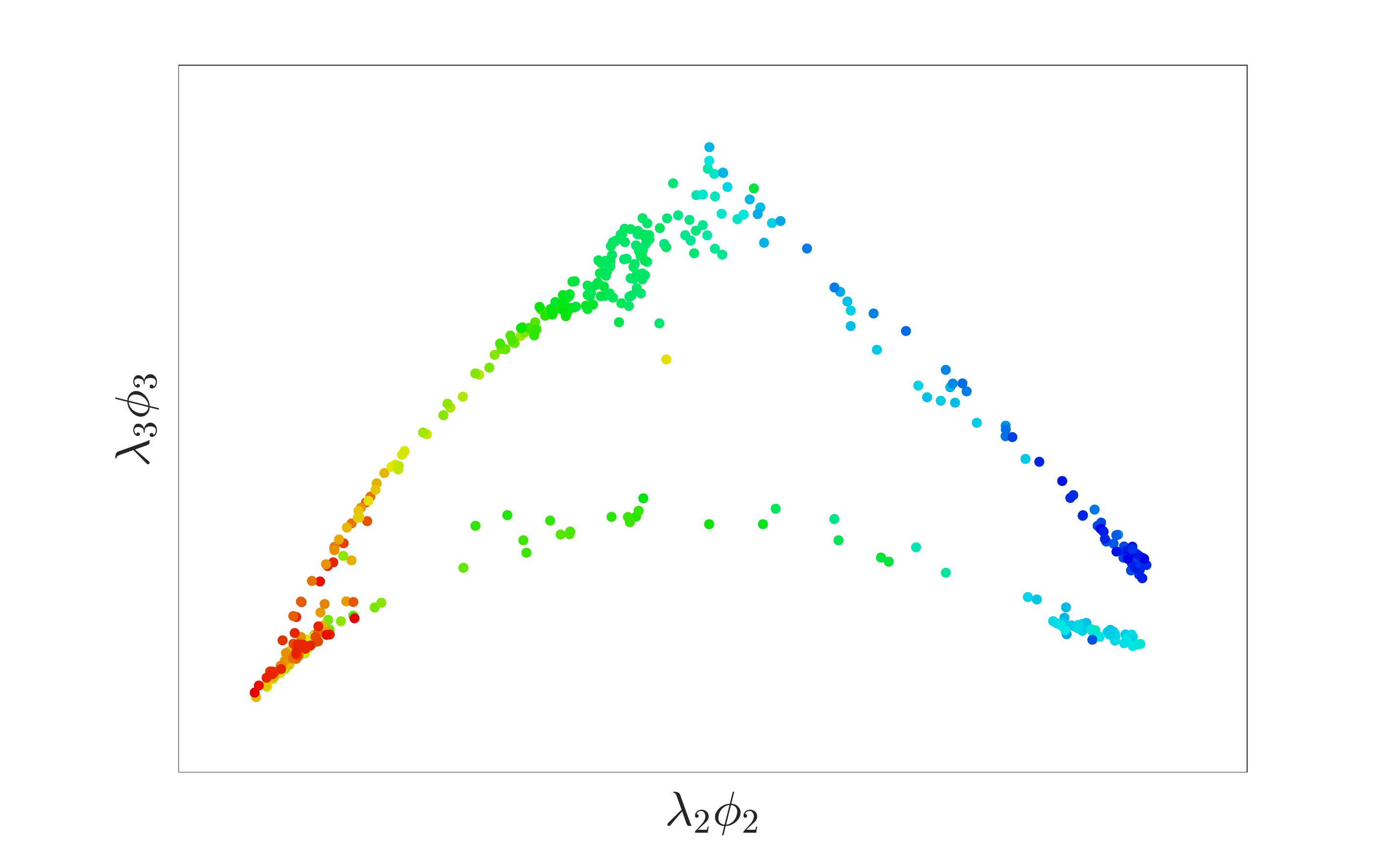}\\
\includegraphics[width=.8\textwidth]{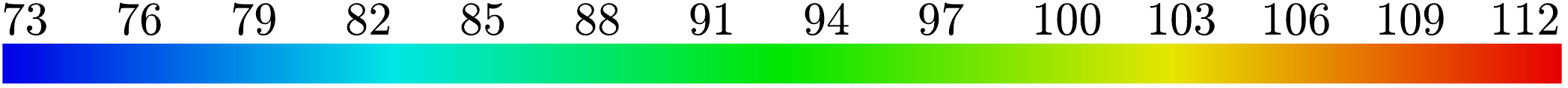}\\
Pulse rate (bpm)\\
\caption{We show the same two-dimensional embedding depicted in Figure~\ref{Figure:abp3d_embs} with a different coloring scheme. We color the $i$-th pulse according to the quantity $\sfrac{60f_s}{a_i - a_{i-1}}$, known as the instantaneous pulse rate (in beats per minute). The pulse rate surges in response to the stimulus and then descends below the original resting rate.} \label{Figure:ABP_HeartRate}
\end{figure}

\section{Previous mathematical models for non-seasonal periodic time series}\label{Section:Review}
The wave-shape oscillatory model takes an approach to modeling time series like the ECG and ABP signals that generalizes previous mathematical models.

\subsection{The phenomenological model}
To begin describing these previous mathematical models and their relationships to the wave-shape oscillatory model, consider a time series of the form
\begin{gather}
f(t) = s(t) \quad t \in \mathbb{R},
\end{gather}
where $s \in \mathcal{C}^{1,\alpha}(\mathbb{R})$ is $1$-periodic (in the sense that $s(t) = s(t+1)$ for all $t \in \mathbb{R}$) and possibly non-sinusoidal. {Here we assume some regularity of $s$ to simplify the discussion. It can be relaxed by taking the distribution theory into account.} In particular, the function $s$ restricted to the interval $[-\sfrac{1}{2}, \sfrac{1}{2}]$ describes the shape of the cycles appearing in the time series. This simple model is yet insufficient for describing physiological time series because physiological time series almost always vary in frequency and amplitude over time (due to the dynamics of the underlying system).  The {\em phenomenological model} proposed in \cite{wu2013instantaneous,hou2016extracting,xu2018recursive} takes amplitude and frequency modulation into account.  
The oscillatory time series $f \colon \mathbb{R} \rightarrow \mathbb{R}$ is modeled as 
\begin{align}\label{Introduction:ANHM0}
f(t)=a(t)s(\phi(t))+T(t)+\Phi(t) \quad t \in \mathbb{R},
\end{align}
where
\begin{itemize}
\item $a\in C^1(\mathbb{R})\cap L^{\infty}(\mathbb{R})$ is a positive, smooth function describing amplitude modulation;
\item $s\in C^{1,\alpha}(\mathbb{R})$ is $1$-periodic and describes the shape of the oscillations in the time series (coined the {\em wave-shape function} \cite{wu2013instantaneous});
\item $\phi\in C^2(\mathbb{R})$ is a smooth, monotonically-increasing function describing phase modulation ($\phi' \in L^{\infty}(\mathbb{R})$ is termed the {\em instantaneous frequency} of $f$);
\item $T\in C^1(\mathbb{R})$ is the {\em trend} that intuitively is locally constant \cite[(6)]{Chen_Cheng_Wu:2014};
\item $\Phi \colon \mathbb{R} \rightarrow \mathbb{R}$ models the inevitable random noise, which we assume to be stationary. 
\end{itemize}
In this work, we always assume that the trend has been removed. For this reason, we do not mention a trend term in the definition of the wave-shape oscillatory model.
Since $a$ and $\phi'$ are not required to be constant, this model is suitable for describing physiological time series that change in amplitude and frequency over time (with some degree of regularity). 
Under this phenomenological model, the usual mission in time series analysis is estimating  $a$, $s$, $\phi$, and the statistical properties of $\Phi$ from one realization of $f$.
 To ensure identifiability, we need to assume that $a$ and $\phi'$ are {\em slowly varying}. To be specific, we say that $a$ and $\phi'$ are slowly varying with parameter $\epsilon > 0$ if the following conditions hold.
\begin{itemize}
\item $|a'(t)|\leq \epsilon \phi'(t)$ for all $t\in \mathbb{R}$;
\item $|\phi''(t)|\leq \epsilon \phi'(t)$ for all $t \in \mathbb{R}$.
\end{itemize}
When we recover $a$ and $\phi$, we are recovering dynamical information about the system being monitored in a limited (but in many cases sufficient) sense. When we recover $s$, we are recovering a representation of the ``mean'' state for the system.  This statement will be made more precise later, but it corresponds well with traditional approaches to cardivascular waveform analysis. 
We now provide some specific examples of how $a$ and $\phi$ encode the changing physiological state of the system being monitored. 

\begin{example} The instantaneous amplitude of a single-lead ECG signal is directly related to respiratory volume (see Figure~\ref{edrillustration}).  When the lung is full of air, the ECG electrode moves further from the heart, and thoracic impedance increases, causing the amplitude of the recorded ECG signal to decrease. On the other hand, when the lung is empty, the ECG electrode moves closer to the heart, and thoracic impedance decreases \cite[Chapter 8]{Clifford:2006:AMT:1213221}. This relationship between respiratory volume and the amplitude of the ECG signal has lead to the design of algorithms which estimate the respiratory signal from the ECG signal. The estimated respiratory signal is called the {\em ECG-derived respiration (EDR)} signal.  We {have discussed} EDR in Section~\ref{Sect:Afib}.
\end{example}

\begin{figure}[htb!]
\centering
\includegraphics[width=\textwidth]{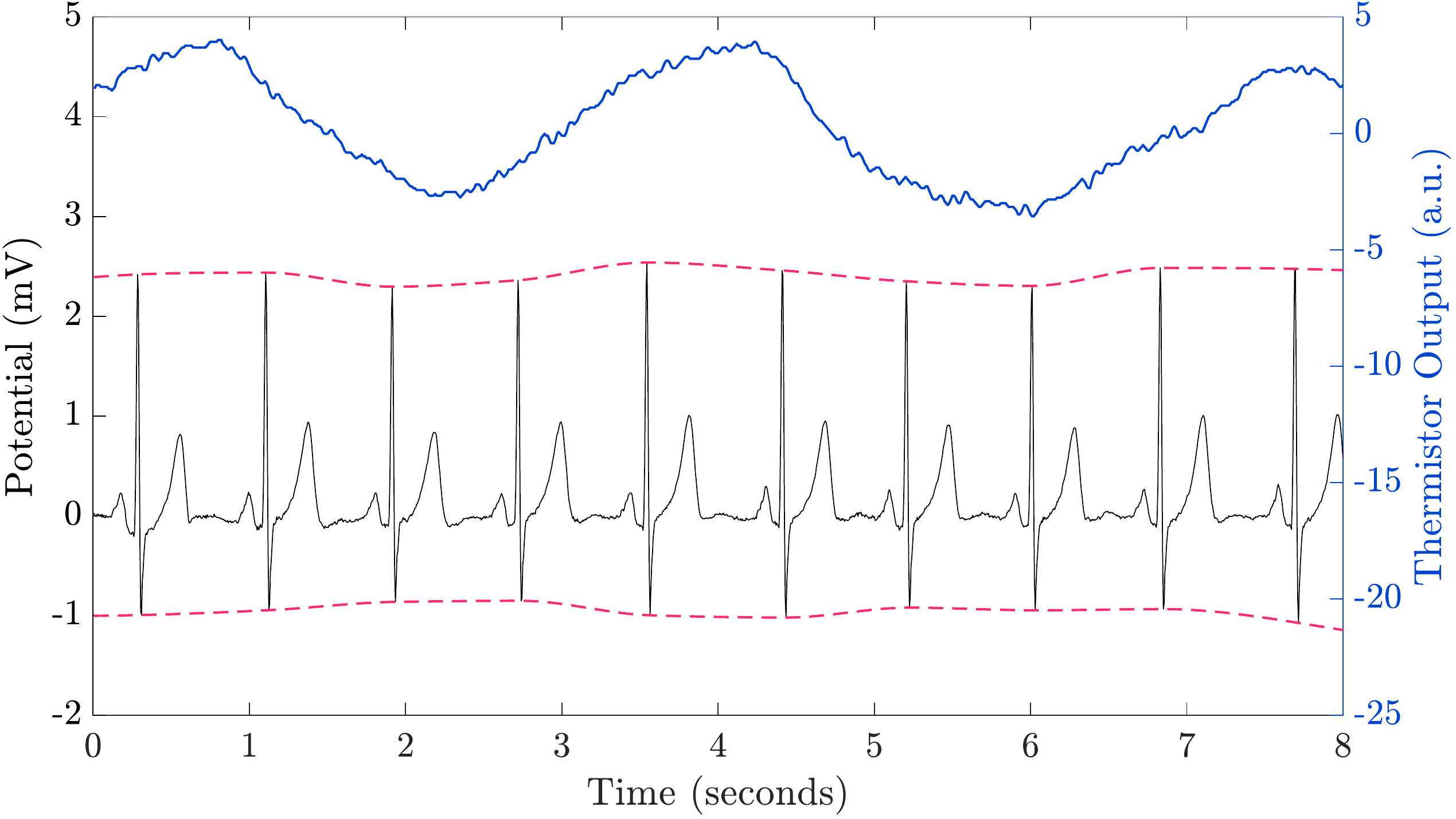}
\caption[The instantaneous amplitude of an ECG signal gives surrogate respiratory information]{The instantaneous amplitude of an ECG signal gives surrogate respiratory information. Top: we plot a respiratory flow signal {(nasal thermistor recording temperature differential)} in blue; bottom: we plot the simultaneously-recorded ECG signal and illustrate its amplitude modulation in a traditional way using the dotted pink lines.}
\label{edrillustration}
\end{figure}

\begin{example}
While the pulse rate of the sinoatrial (SA) node is constant, heart rate is generally not constant \cite{shaffer2014healthy}. This discrepancy is caused by neural and neuro-chemical influences on the pathway from the SA node to the cardiac muscles. Heart rate can be modeled as the instantaneous frequency of the ECG signal (see Figure~\ref{ihrillustration}); variations in heart rate (which correspond to frequency modulation of the ECG signal) are the main object of study in the field of heart rate variability (HRV) analysis \cite{shaffer2014healthy}.
\end{example}

\begin{figure}[htb!]
\centering
\includegraphics[width=\textwidth]{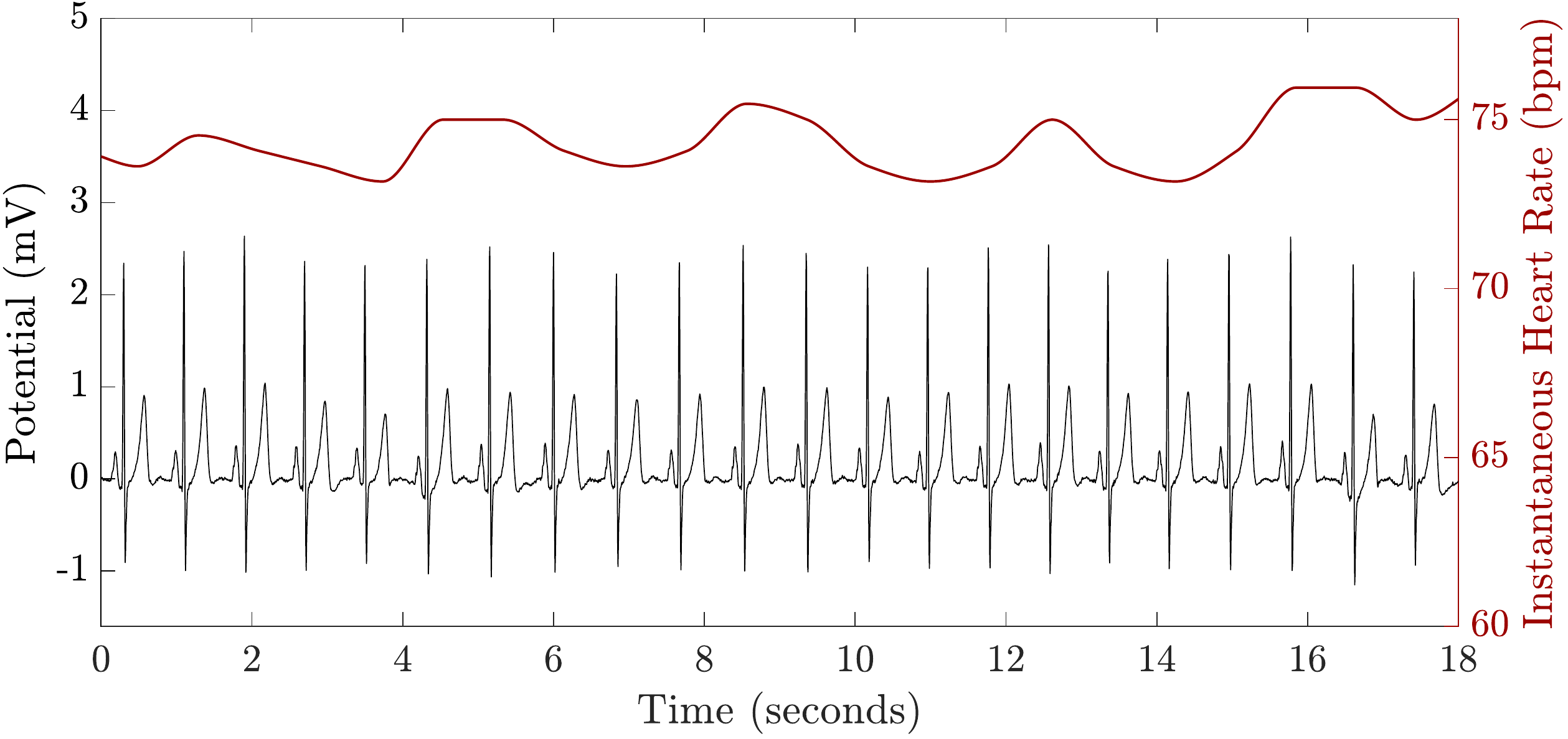}
\caption[The instantaneous frequency of an ECG signal is known as the instantaneous heart rate]{The instantaneous frequency of an ECG signal is known as the instantaneous heart rate. In red, we plot the instantaneous heart rate signal corresponding to the ECG signal in black.}
\label{ihrillustration}
\end{figure}

The wave-shape function $s$ deserves some more discussion.  In particular, we discuss the physiological information it encodes and how recovering $s$ is akin to traditional electrophysiological practices.  When analyzing ECG, electrocardiologists derive physiological information from the cardiac waveform independently of the amplitude and frequency modulation inherent in the signal. Their approach could be called a {\em landmark} approach to biomedical time series analysis, wherein they notice that the components of the cardiac wave-shape (as manifested in the ECG) correspond to different stages of heart contraction (and hence the heart substructures that are active during those stages).  Landmarks on the enigmatic template are conventionally labeled as shown in Figure~\ref{Fig:ECGTemplate}, and each deflection corresponds to a particular stage of heart contraction. 

\begin{figure}[htb!]
\centering
\includegraphics[width=.5\textwidth]{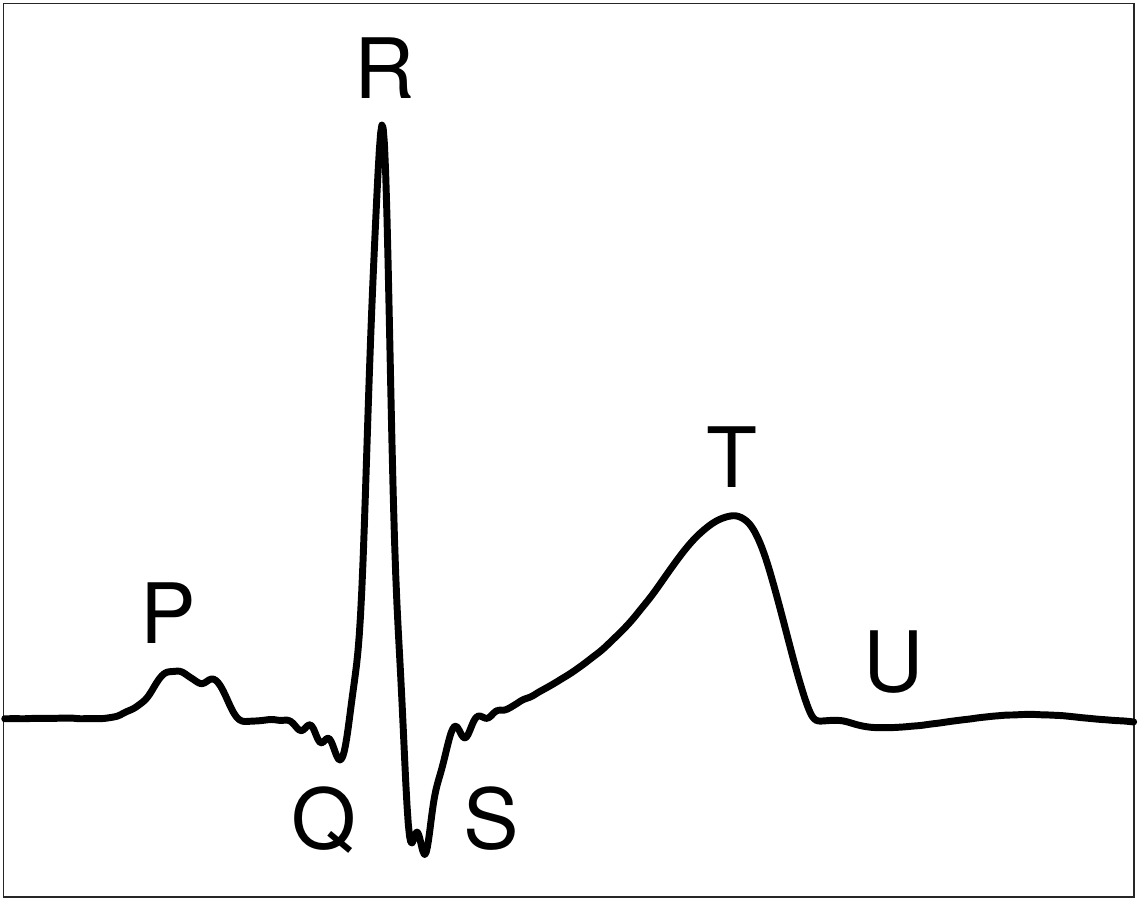}
\caption[Landmarks on the enigmatic ECG template correspond to distinct stages of a healthy heart contraction]{Landmarks on the enigmatic ECG template correspond to distinct stages of a healthy heart contraction.  The P wave corresponds to the depolarization of the atria, the QRS complex corresponds to the depolarization of the ventricles, the T wave corresponds to the repolarization of the ventricles, and the U wave corresponds to the repolarization of the papillary muscles.}
\label{Fig:ECGTemplate}
\end{figure}

\begin{example} When diagnosing atrial fibrillation (a cardiac arrhythmia associated with heart failure and stroke) using the ECG signal, cardiologists look for the absence of P waves, among other considerations (see Figure~\ref{Fig:AfibExample}).  The cardiac cycle is assessed independently of the frequency modulation caused by heart rate, the amplitude modulation caused by respiration, and the noise in the signal.  In atrial fibrillation, an additional confounding factor when recovering the cardiac wave-shape $s$ is the presence of a noise-like component called the fibrillatory wave ($f$-wave). Atrial fibrillation, including its modeling and analysis, is a challenging topic to which the wave-shape oscillatory model has been implicitly applied \cite{Malik_Reed_Wang_Wu:2017}. 
\end{example}

\begin{figure}[htb!]
\centering
\includegraphics[width=\textwidth]{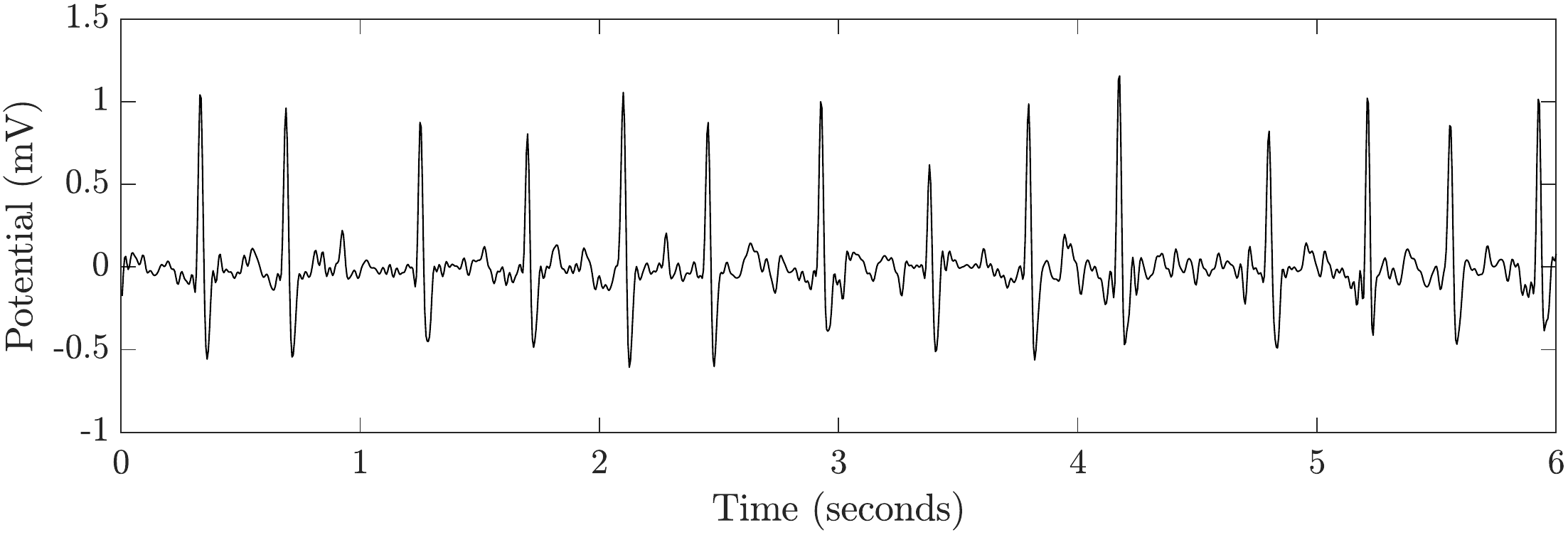}
\caption[Atrial fibrillation is characterized by a lack of P waves, irregularly-occurring ventricular contractions, and a fibrillatory wave]{Atrial fibrillation is characterized by a lack of P waves, irregularly-occurring ventricular contractions, and a rapid component called the fibrillatory wave.}
\label{Fig:AfibExample}
\end{figure}

\begin{example} When diagnosing myocardial infarction (blockage of the coronary artery), electrocardiologists look for a phenomenon called ST segment elevation, wherein the cardiac wave-shape in any precordial lead reaches an electric potential of approximately $0.2$ mV between the S and T landmarks.  While a high heart rate usually accompanies a heart attack, assessing the cardiac wave-shape independently of the heart rate allows physicians to detect the difference between, for example, high heart rates that accompany physical activity and high heart rates that accompany adverse cardiac events. 
\end{example}

\begin{example} In ECG-based cardiac waveform analysis, the QT interval is the length of time between the start of the Q wave and the end of the T wave.  Since heart rate is known to be negatively correlated with cardiac cycle duration, QT interval lengths are commonly corrected so that the QT interval lengths of subjects with different heart rates can be effectively compared.  This correction process roughly amounts to recovering the duration of $s$ independently of any frequency modulation. 
An abnormal QT interval is associated with an increased risk for sudden cardiac death. In clinical trials for new medications, pharmaceutical companies use the corrected QT interval to assess the increased risk for sudden cardiac death that a patient taking the new drug will incur.
\end{example}

We return to a mathematical discussion of the wave-shape function. Due to the smoothness of $s$, it has a point-wise Fourier representation 
\begin{equation}\label{expansion of s}
s(t)=\alpha_0+\sum_{k=1}^\infty \alpha_k\cos(2\pi kt+\beta_k) \quad t\in \mathbb{R},
\end{equation}
where $\alpha_0\in \mathbb{R}$ and $\alpha_k\geq0$ are associated with the Fourier coefficients of $s$, and $\beta_k\in[0,2\pi)$. Hence, we have the following expansion for $a(t)s(\phi(t))$ in (\ref{Introduction:ANHM0}):
\begin{equation}\label{Introduction:ANHM1}
a(t)s(\phi(t))=a(t)\Big[\alpha_0+\sum_{k=1}^\infty \alpha_k\cos(2\pi k\phi(t)+\beta_k)\Big] \quad t \in \mathbb{R}.
\end{equation}
The signal $a(t)s(\phi(t))$ can be interpreted in two different ways. First, we could view it as an oscillatory signal with one oscillatory component; in this case, the oscillation is non-sinusoidal. Second, we could view it as an oscillatory signal with multiple oscillatory components, each having a cosine oscillatory pattern; in this case, we call the first oscillatory component $a(t)\alpha_1\cos(2\pi \phi(t)+\beta_1)$ the {\em fundamental component} and $a(t)\alpha_k\cos(2\pi k\phi(t)+\beta_k)$ (for $k \geq 2$) the {\em $k$-th multiple} of the fundamental component. Clearly, $\alpha_0$ is the zero-frequency term of the wave-shape function, and the instantaneous frequency of the $k$-th multiple is $k$-times that of the fundamental component. 
The second viewpoint is easier to analyze theoretically (e.g. via time-frequency methods such as the short-time Fourier transform and its associated synchrosqueezed representation). However, as described in the above examples, the first viewpoint is more physiological and coincides well with traditional electrophysiological practices. 
Physicians diagnose various cardiac diseases by reading the wave-shape function. However, physicians read the wave-shape in the time domain and not in terms of its Fourier decomposition. 
We have thus confirmed the utility of the wave-shape function for modeling and studying an oscillatory biomedical time series.

\subsection{Generalizing the Phenomenological Model}\label{section:generalization of the phenomenological model}
While the phenomenological model \eqref{Introduction:ANHM0} is able to capture the non-sinusoidal nature of cycles in an oscillatory biomedical time series, it does not capture time-dependent changes in oscillatory morphology (aside from those deviations due to amplitude and frequency modulation). 
Physiologically, this time-varying oscillatory morphology is important. Modeling changes in oscillatory morphology is especially relevant due to the increased prevalence of long-term, mobile cardiac monitoring. 

\begin{example} During a surgical procedure in which a general anesthetic is required, dosage is varied and administered based on the discretion of the supervising anesthesiologist or nurse anesthetist; vital signs are recorded for the duration of the procedure, which may be several hours.  The changing concentration of anesthetic in the body will continuously modulate the cardiac or pulse wave-shape of the patient. The dynamics underlying general anesthesia are discussed in \cite{Wang_Wu_Huang_Chang_Ting_Lin:2019}, and the analysis is extended to ABP signals.
\end{example}

\begin{example}
In the mobile monitoring of patients suspected of having heart disease, recordings are many hours long. However, adverse cardiac events such as paroxysmal supraventricular tachycardia or myocardial ischemia are only a few minutes in duration; the rest of the time, the patient appears normal. In this case, the cardiac wave-shape changes depending on the modulating cardiac health of the patient being monitored.
\end{example}

Motivated by the above examples, the phenomenological model is generalized to fully capture the time-varying morphology of cycles \cite{lin2018wave}. A similar model is considered in the followup research articles \cite{xu2018recursive,yang2017multiresolution}. The main idea is intuitive. 
We generalize $\{\alpha_k\}_{k=0}^\infty$ in \eqref{expansion of s} to be time-varying: for each $k \geq 0$ we let $a(t)\alpha_k$ be a new function $A_k(t)$.  We also let each $k\phi(t)+\beta_k/(2\pi)$ be a new function $\phi_k(t)$. We then have
\begin{equation}\label{Introduction:ANHM2}
a(t)s(\phi(t))=A_0(t)+\sum_{k=1}^\infty A_k(t)\cos(2\pi \phi_k(t)) \quad t \in \mathbb{R},
\end{equation}
where $A_k$ and $\phi_k$ satisfy a few conditions. In addition to the slowly varying assumption imposed in the phenomenological model, we have 
\begin{itemize}
\item $|\phi_k'(t)-k\phi_1'(t)|\leq \epsilon \phi_1'(t)$ for all $k\in \mathbb{N}_+$, where $\phi_1'>0$;
\item $A_k(t)\leq c_kA_1(t)$ for all $k\in \mathbb{N}_+$, where $A_1>0$ and $\{c_k\}_{k=1}^\infty$ is an $\ell^1$ sequence;
\item $|\phi_k''(t)|\leq \epsilon k\phi_1'(t)$ for all $k\in \mathbb{N}_+$;
\item $|A_k'(t)|\leq \epsilon c_k\phi_1'(t)$ for all $k \in \mathbb{N}_+$.
\end{itemize}
Other regularity conditions listed in \cite[Definition 2.1]{lin2018wave} also hold. Here, the conditions $|\phi_k'(t)-k\phi_1'(t)|\leq \epsilon \phi_1'(t)$ and $A_k(t)\leq c_kA_1(t)$ capture the fact that the wave-shape is not fixed. 
The conditions $|\phi_k''(t)|\leq \epsilon k\phi_1'(t)$ and $|A_k'(t)|\leq \epsilon c_k\phi_1'(t)$ capture the fact that the wave-shape may not change dramatically from one cycle to the next. 

While this model has been used to design algorithms which handle various physiological problems such as fetal ECG analysis \cite{SuWu2017}, fetal magnetocardiography \cite{escalona2018comparison}, simultaneous heart rate and respiratory rate estimation from the PPG \cite{cicone2017nonlinear}, and cardiogenic artifact recycling \cite{lu2019recycling}, its dependence on the slowly varying wave-shape assumption limits its applicability to physiological time series. Moreover, since the generalized phenomenological model \cite{lin2018wave} encodes the time-varying nature of the wave-shape function in the frequency domain, the wave-shape function interacts with the instantaneous frequency in a non-trivial way, which further limits its ability to independently quantify dynamics encoded by the instantaneous frequency and the time-varying wave-shape. 
Specifically, when the subject is not of normal physiology, the model is limited.

\begin{example} Premature ventricular contractions are heartbeats which are not triggered by the SA node but instead originate in the ventricles (the lower compartment of the heart). The frequency at which these beats occur is associated with congestive heart failure \cite{dukes2015ventricular}.  Premature ventricular contractions manifest as the heart skipping a beat and appear at seemingly random times. Morphologically, they appear very different from beats triggered by the SA node, and a signal featuring premature ventricular contractions will not adhere to the slowly varying wave-shape assumption (see Figure~\ref{PVCExample}).  A patient may have multiple morphologically distinct types of premature ventricular contractions, in which case his or her risk for cardiac disease increases. 
\end{example}

\begin{figure}[htb!]
\centering
\includegraphics[width=\textwidth]{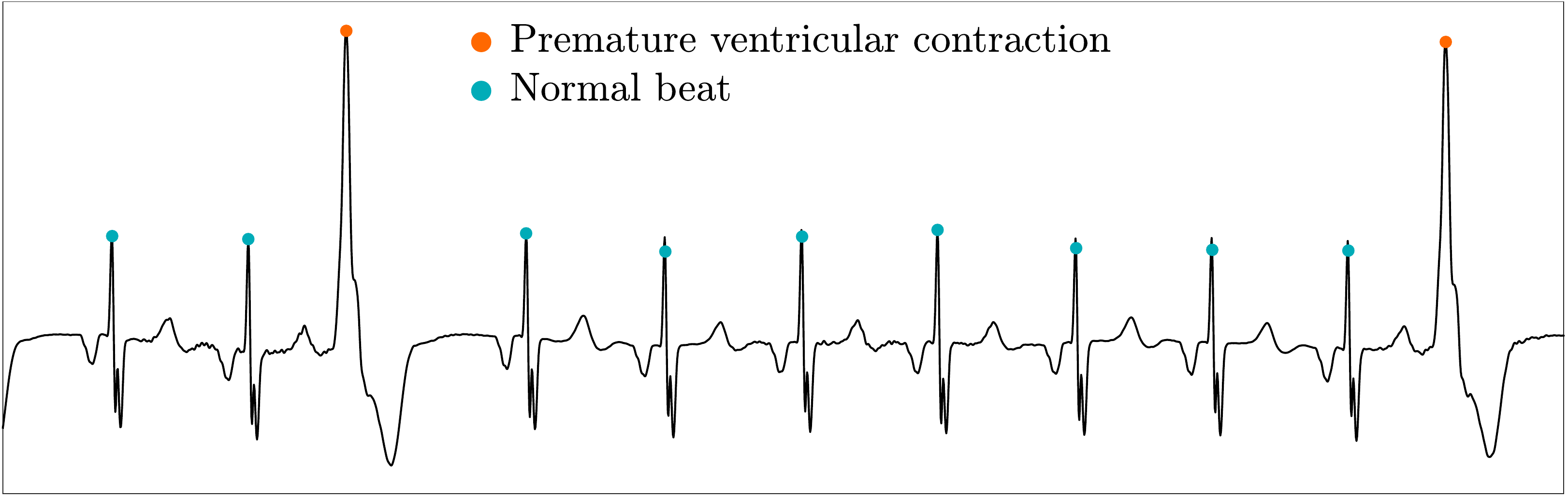}
\caption[ECG signals featuring premature ventricular contractions are difficult to model because the wave-shape is not slowly varying]{ECG signals featuring premature ventricular contractions are difficult to model because the wave-shape is not slowly varying. Premature ventricular contractions are morphologically distinct from normal beats and are not triggered by the SA node.}
\label{PVCExample}
\end{figure}

We refer readers with interest in the generalized phenomenological model to \cite{lin2018wave,xu2018recursive,yang2017multiresolution} for its theoretical details and to \cite{SuWu2017,escalona2018comparison,cicone2017nonlinear,lu2019recycling} for applications.
As useful as this model is, it cannot capture the (non-slowly) time-varying morphology of cycles in an oscillatory biomedical time series. Specifically, we have interest in those dynamics which are encoded by variations in cycle morphology that are independent of amplitude and frequency modulation.

\subsection{Connecting the phenomenological model to the wave-shape oscillatory model}\label{Section:Connecting2models}

We show that for a given signal satisfying the phenomenological model, the collection of all oscillatory patterns can be well-approximated by a compact, two-dimensional, one-chart Riemannian $\mathcal{C}^1$-manifold.
The slowly varying assumption is used to show that each oscillatory pattern in the physiological signal is close to one whose amplitude and frequency are constant. The following theorem says that the collection of all such constant-amplitude, constant-frequency oscillatory patterns is a one-chart manifold. We use this opportunity to refer to the simulation of such a one-chart wave-shape manifold in Figure~\ref{Figure:waveshape manifold simulation}.

\begin{theorem}\label{theorem: simplified wave-shape manifold}
Suppose $s \in \mathcal{C}^2(\mathbb{R})$ is a function whose support is a subset of $[-\sfrac{1}{2},\sfrac{1}{2}]$. Let $U = I_1 \times I_2$, where $I_1\subset (0,\infty)$ and $I_2 \subset (1,\infty)$ are two open intervals of finite lengths. Define a map $\Phi \colon  U \rightarrow \mathcal H\subset L^2(\mathbb{R})$ by sending an ordered pair $(a, f) \in U$ to the function
$\mathbb{R} \ni t\mapsto a s ( f t ).$
Then $\Phi$ is a $\mathcal{C}^1$-diffeomorphism onto an open subset $\mathcal M:=\Phi(U)\subset L^2(\mathbb{R})$; that is, $\mathcal M$ is a manifold with one chart.
\end{theorem}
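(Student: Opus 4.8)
My plan is to establish, in sequence, four properties of $\Phi$ and then combine them: that $\Phi$ (i) genuinely maps $U$ into $\mathcal H\subset L^2(\mathbb R)$, (ii) is injective, (iii) is Fréchet-$\mathcal C^1$ with everywhere-injective differential, and (iv) is a homeomorphism onto its image $\mathcal M$. Together, (i)--(iv) say that $\Phi$ is a $\mathcal C^1$-embedding, so that $\mathcal M=\Phi(U)$ is a two-dimensional $\mathcal C^1$-submanifold of $L^2(\mathbb R)$ covered by the single global chart $\Phi^{-1}$, which is exactly the assertion. I will use throughout the (implicit) standing assumption that the wave-shape $s$ is not identically zero; for $s\equiv 0$ the statement fails, so this hypothesis is really needed. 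Property (i) is immediate: for $(a,f)\in U$ one has $a>0$ and $f>1$, so $\operatorname{supp}\bigl(as(f\,\cdot)\bigr)=\tfrac1f\operatorname{supp}(s)\subseteq[-\sfrac1{2f},\sfrac1{2f}]\subseteq[-\sfrac12,\sfrac12]$, and $as(f\,\cdot)\in\mathcal C^2(\mathbb R)$ because $s\in\mathcal C^2(\mathbb R)$; hence $\Phi(a,f)\in\mathcal H$. (This is precisely why one takes $I_1\subset(0,\infty)$ and $I_2\subset(1,\infty)$.)

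For injectivity (ii), suppose $a_1 s(f_1 t)=a_2 s(f_2 t)$ as elements of $L^2$; since both sides are continuous, equality holds pointwise. Substituting $u=f_1 t$ and writing $r=f_2/f_1$ gives $a_1 s(u)=a_2 s(ru)$ for all $u\in\mathbb R$. If $r\neq 1$, we may assume $r>1$ (otherwise interchange the two sides), and iterating yields $s(u)=(a_2/a_1)^n s(r^n u)$ for every $n\in\mathbb N$; for fixed $u\neq 0$ we have $r^n u\to\pm\infty$, and compact support of $s$ forces $s(r^n u)=0$ for all large $n$, so $s(u)=0$. Thus $s$ vanishes off the origin, hence $s\equiv 0$ by continuity, a contradiction. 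Therefore $f_1=f_2$, and then $a_1=a_2$ because $s\not\equiv 0$.

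For (iii), a direct computation of difference quotients identifies the candidate partials: $\partial_a\Phi(a,f)$ is the function $t\mapsto s(ft)$, and $\partial_f\Phi(a,f)$ is the function $t\mapsto at\,s'(ft)$. These limits hold in $L^2$ because all the relevant integrands are supported in $[-\sfrac12,\sfrac12]$ and uniformly bounded there (using $s\in\mathcal C^2$), so the mean-value form of the difference quotient converges by dominated convergence; continuity of $(a,f)\mapsto\partial_a\Phi(a,f)$ and $(a,f)\mapsto\partial_f\Phi(a,f)$ into $L^2$ follows from uniform continuity of $s$ and $s'$ plus the same domination. Hence $\Phi$ is $\mathcal C^1$ with $d\Phi_{(a,f)}(\alpha,\beta)$ equal to the function $t\mapsto\alpha s(ft)+\beta at\,s'(ft)$. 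To see this differential is injective, suppose that function vanishes identically; after the substitution $u=ft$ it reads $\alpha s(u)+c\,u\,s'(u)=0$ for all $u$, with $c=\beta a/f$. Evaluating at a point $u_0$ where $|s|$ attains its (positive) maximum over $\mathbb R$ — such a $u_0$ exists since $s$ is continuous, compactly supported, and $\not\equiv 0$ — we get $s'(u_0)=0$, hence $\alpha s(u_0)=0$, hence $\alpha=0$; then $c\,u\,s'(u)=0$ for all $u$ forces $s'\equiv 0$ (first off the origin, then everywhere by continuity), so $s$ is constant and thus $\equiv 0$ unless $c=0$. Therefore $\alpha=\beta=0$ and $\Phi$ is an injective immersion.

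The remaining step (iv) is the \emph{main obstacle}: $U$ is bounded but need not be closed, so one cannot invoke compactness, and a continuous inverse must instead be produced explicitly. For $g\in L^2(\mathbb R)$ supported in $[-\sfrac12,\sfrac12]$, define $F_1(g)=\int_{-1/2}^{1/2} t^2 g(t)^2\,dt$ and $F_2(g)=\|g\|_{L^2}^2$; both are continuous on the subspace of such functions (for $F_1$, $|F_1(g)-F_1(h)|\leq\tfrac14\|g-h\|_{L^2}\,\|g+h\|_{L^2}$ by Cauchy--Schwarz), hence continuous on $\mathcal M$ with its subspace topology. A change of variables gives $F_1(\Phi(a,f))=c_1 a^2/f^3$ and $F_2(\Phi(a,f))=c_2 a^2/f$, where $c_1=\int u^2 s(u)^2\,du>0$ and $c_2=\|s\|_{L^2}^2>0$ (positivity uses $s\not\equiv 0$). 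On $\mathcal M$ these are strictly positive, so $f=\sqrt{(c_1/c_2)\,F_2(g)/F_1(g)}$ and $a=\sqrt{f\,F_2(g)/c_2}$ recover $(a,f)$ as continuous functions of $g\in\mathcal M$; i.e., $\Phi^{-1}$ is continuous. Combining (i)--(iv): $\Phi$ is a $\mathcal C^1$-diffeomorphism from the open set $U\subset\mathbb R^2$ onto $\mathcal M$, so $\mathcal M$ is a $\mathcal C^1$-manifold with the single chart $\Phi^{-1}$. The two places where the compact-support hypothesis on $s$ does the essential work are the iteration argument in (ii) and the immersion half of (iii); step (iv) is where the non-compactness of $U$ must be circumvented by the functional construction.
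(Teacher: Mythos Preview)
Your proof is correct and in fact more complete than the paper's. The paper establishes only your steps (ii) and (iii) --- injectivity of $\Phi$ and full rank of $D\Phi$ --- and then simply declares $\Phi$ a diffeomorphism; it never addresses your step (iv), the continuity of $\Phi^{-1}$, which is what is needed to upgrade an injective immersion from the non-compact open set $U$ to an embedding with the subspace topology. Your moment-functional construction $F_1(g)=\int t^2 g^2$ and $F_2(g)=\|g\|_{L^2}^2$, yielding $(a,f)$ as explicit continuous functions of $g\in\mathcal M$, supplies precisely this missing piece. For injectivity the paper compares supports directly (if $f_2>f_1$ then $\operatorname{supp}\bigl(a_2 s(f_2\,\cdot)\bigr)\subsetneq\operatorname{supp}\bigl(a_1 s(f_1\,\cdot)\bigr)$, so the two functions cannot coincide), whereas you iterate the relation $s(u)=(a_2/a_1)\,s(ru)$ and let $r^n u$ escape the support; both arguments are short and rely on compact support in the same way. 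The immersion argument is essentially identical in the two proofs: evaluate the putative linear relation at a point where $|s|$ is maximal so that $s'$ vanishes there. You also make explicit the standing hypothesis $s\not\equiv 0$, which the paper uses but does not state.
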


\begin{proof}
Note that since $s\in \mathcal{C}^{2}_c(\mathbb{R})$, $\Phi(U)\subset \mathcal{C}^{2}_c(\mathbb{R})$. 
Suppose $(a_1,f_1)\neq (a_2,f_2)$ while
\begin{equation}
\Vert \Phi(a_1, f_1) - \Phi(a_2, f_2) \Vert_2 = 0.
\end{equation}
We abuse the notation and denote $\Phi(a_1, f_1)=a_1s(f_1t)$ to simplify the discussion when there is no danger of confusion.
Since $s$ continuous, we have
\begin{equation} \label{proof:prop1:eq1}
a_1 s ( f_1 t ) = a_2s ( f_2 t )
\end{equation} 
for all $t \in \mathbb{R}$. 
Suppose $a_1 s ( f_1 t )$ has support $[a,b]\subset [-1/2f_1,1/2f_1]$. Without loss of generality, assume $f_2>f_1$, $a <0$, and $b>0$. The support of $a_2s(f_2t)$ is thus $[af_1/f_2,bf_1/f_2]\subset[a,b]$. Hence, we can find $x\in [a,af_1/f_2]\cup [bf_1/f_2,b]$ so that $a_1 s ( f_1 x) \neq 0$ but $a_2s ( f_2 x)=0$, which contradicts \eqref{proof:prop1:eq1}. As a result, we have shown that $\Phi$ is one-to-one and onto $\mathcal M:=\Phi(U)$.

We claim that the total differential of $\Phi$ is
\begin{equation}
D\Phi|_{(a_0, f_0)} (h_1, h_2) = h_1 s(f_0t) + h_2 a_0t s'(f_0t),
\end{equation}
where $(a_0,f_0)\in U$ and $(h_1,h_2)\in \mathbb{R}^2$. Note that $\Phi$ is a Hilbert space-valued function. Indeed, when $(h_1,h_2)=(a-a_0,f-f_0)$, we have 
\begin{align}\label{TotalDifferential}
&\Phi(a,f)-\Phi(a_0,f_0)-D\Phi|_{(a_0, f_0)} (a-a_0,f-f_0)\\
=&\,as(ft)-a_0s(f_0t)-[(a-a_0)s(f_0t)+(f-f_0)a_0ts'(f_0t)]\nonumber\\
=&\,a_0[s(ft)-s(f_0t)-(f-f_0)ts'(f_0t)]+(a-a_0)(s(ft)-s(f_0t))\nonumber.
\end{align}
By the integral form of Taylor's expansion, we have 
\begin{gather}
s(ft)-s(f_0t)=\int_{f_0}^fts'(zt)\,dz
\end{gather}
and 
\begin{gather}
s(ft)-s(f_0t)-(f-f_0)ts'(f_0t)=\int_{f_0}^f(f-z)t^2s''(zt)\, dz.
\end{gather}
Here, we view $s(ft)$ as a function of $f$. Next, we bound the $L^2$ norm of \eqref{TotalDifferential}. Indeed, we have
\begin{align}
&\Big\|\int_{f_0}^f(f-z)t^2s''(zt)\, dz\Big\|_{L^2}\\
=&\,\int\int_{f_0}^f\int_{f_0}^f(f-z)t^2s''(zt)(f-z')t^2s''(z't)\, dzdz'dt\nonumber\\
=&\,\int_{f_0}^f\int_{f_0}^f(f-z)(f-z')\Big[\int t^2s''(zt)t^2s''(z't)\, dt\Big]dzdz'\nonumber.
\end{align}
Note that since $s\in \mathcal{C}^2_c(\mathbb{R})$, we have
\begin{align}
\Big|\int t^2s''(zt)t^2s''(z't)\, dt\Big|&\leq \Big(\int t^4|s''(zt)|^2\, dt\Big)^{1/2}\Big(\int t^4|s''(z't)|^2\, dt\Big)^{1/2}=\frac{C}{zz'}\nonumber
\end{align}
for some $C>0$ depending on the fourth absolute moment of $s''$ only. As a result,
\begin{align}
\Big\|\int_{f_0}^f(f-z)t^2s''(zt)\, dz\Big\|_{L^2}&\leq C\int_{f_0}^f\int_{f_0}^f\frac{|(f-z)(f-z')|}{zz'}\, dzdz'\\
&\leq \frac{C}{f_0^2}\Big(\int_{f_0}^f|f-z|\, dz\Big)^2=\frac{C|f-f_0|^4}{4f_0^2}.\nonumber
\end{align}
Similarly, we can bound $\int_{f_0}^fts'(zt)\, dz$.
Therefore, 
\begin{gather}
\|\Phi(a,f)-\Phi(a_0,f_0)-D\Phi|_{(a_0, f_0)} (a-a_0,f-f_0)\|_{L^2}\leq C|f-f_0|^4,
\end{gather} which leads to 
\begin{gather}
\frac{\|\Phi(a,f)-\Phi(a_0,f_0)-D\Phi|_{(a_0, f_0)} (a-a_0,f-f_0)\|_{L^2}}{\|(a-a_0,f-f_0)\|}\to0
\end{gather} when $\|(a-a_0,f-f_0)\|\to 0$.
To finish the proof, we show that the total differential of $\Phi$ at $(a,f)\in U$, $D\Phi|_{(a, f)}$, is of full rank for any $(a,f)\in U$. 
It suffices to show that $s(ft)$ and $at s'(ft)$ are linearly independent in $L^2(\mathbb{R})$.  Suppose there are constants $c_1, c_2 \in \mathbb{R}$ such that for all $t \in \mathbb{R}$, 
\begin{equation}
c_1s(ft) = c_2ats'(ft).
\end{equation}
Suppose $c_1\neq 0$. In this case, $c_2\neq 0$; otherwise, we have a contradiction.
Since $s\in \mathcal{C}^2_c(\mathbb{R})$, there exists $t\neq 0$ such that $s(ft)\neq 0$ is the extremal value of $s$; that is, $fs'(ft)=0$. Since $f>0$, $s'(ft)=0$. Therefore, we have $c_1s(ft)\neq 0$ but $c_2ats'(ft)=0$, which is a contradiction. As a result, we conclude that $c_1=0$. In this case, $c_2$ must be $0$; otherwise, we have a contradiction. This concludes the claim that $D\Phi|_{(a, f)}$ is of full rank.
We conclude that $\Phi \colon U \rightarrow \mathcal{M}$ is a differomorphism.
\end{proof}

See Figure~\ref{Figure:waveshape manifold simulation} for an example of the wave-shape manifold in Theorem~\ref{theorem: simplified wave-shape manifold}. 
We can clearly see the nonlinear structure of the one-chart wave-shape manifold determined by the wave-shape function $s$ (the \texttt{db4} wavelet). We scale and dilate $s$ by independently sampling amplitudes uniformly from $I_1 = \left[ \sfrac{3}{4}, \sfrac{5}{4} \right]$ and frequencies uniformly from $I_2 = [1, 9]$. We visualize the wave-shape manifold by linearly projecting a set of $N = 2500$ discretized wave-shape functions from $\mathbb{R}^{7169}$ to $\mathbb{R}^3$. Each point is a wave-shape function $t \mapsto as(ft)$; red corresponds to high frequencies, and blue corresponds to low frequencies.
Next, we show that for a signal satisfying the phenomenological model \eqref{Introduction:ANHM0}, it can be well-approximated by the manifold indicated in Theorem \ref{theorem: simplified wave-shape manifold}.

\begin{theorem}\label{Theorem2 waveshape approximation}
Take $\epsilon> 0$ to be sufficiently small. Consider $f \colon \Bbb R \rightarrow \Bbb R$ satisfying the phenomenological model \eqref{Introduction:ANHM0}:
\begin{equation}
f(t) = A(t)\, s\left(\phi(t)\right).
\end{equation}
Assume without loss of generality that $\inf_{t\in \mathbb{R}}A(t)>0$ and $\inf_{t\in \mathbb{R}}\phi'(t)= 1$.
Define $t_n:=\phi^{-1}(n)$, where $n\in \mathbb{Z}$, and define functions on $\mathbb{R}$ as
\begin{equation}
g_n(t):=\left\{
\begin{array}{cl}
A(t_n)s(\phi'(t_n)t) & \text{if } \frac{-1}{2\phi'(t_n)}\leq t\leq \frac{1}{2\phi'(t_n)}\\
0 & \text{otherwise,}
\end{array}\right.
\end{equation}   
and
\begin{equation}
f_n(t):=\left\{
\begin{array}{cl}
A(t_n+t)s(\phi(t_n+t)) & \text{if } \frac{-1}{2\phi'(t_n)}\leq t\leq \frac{1}{2\phi'(t_n)}\\
0 & \text{otherwise.}
\end{array}\right.
\end{equation}   
We then have uniformly over $n$ that
\begin{equation}
\|g_n-f_n\|_\infty\leq C\epsilon\,,
\end{equation}
where $C=C(\|A\|_\infty,\|\phi'\|_\infty,\|s\|_{\mathcal{C}^1},\|\phi''\|_\infty)$. 
\end{theorem}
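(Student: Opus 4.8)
The plan is to reduce the whole statement to a single pointwise estimate on the interval $J_n := [-1/(2\phi'(t_n)),\, 1/(2\phi'(t_n))]$. Outside $J_n$ both $g_n$ and $f_n$ vanish by construction, so $\|g_n - f_n\|_\infty = \sup_{t\in J_n}|g_n(t) - f_n(t)|$; the key structural point is that $g_n$ and $f_n$ are truncated to the \emph{same} interval, so there is no ``boundary-mismatch'' region where one vanishes and the other does not. On $J_n$ I would split the difference as
\[
g_n(t) - f_n(t) = \bigl[A(t_n) - A(t_n+t)\bigr]\,s\bigl(\phi'(t_n)t\bigr) + A(t_n+t)\,\bigl[s\bigl(\phi'(t_n)t\bigr) - s\bigl(\phi(t_n+t)\bigr)\bigr],
\]
handling the first summand with the amplitude-modulation bound and the second with a linearization of the phase.

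The preliminary observation I would record is that $\inf_t\phi'(t) = 1$ forces $\phi'(t_n)\ge 1$, hence $|t|\le 1/2$ on $J_n$; this is what keeps all constants clean. For the first summand, $|A(t_n) - A(t_n+t)| \le |t|\sup|A'| \le |t|\,\epsilon\|\phi'\|_\infty \le \tfrac12\epsilon\|\phi'\|_\infty$ by the slowly-varying hypothesis $|A'|\le\epsilon\phi'$, and $|s(\phi'(t_n)t)|\le\|s\|_\infty$. For the second summand I would use the $1$-periodicity of $s$ together with $\phi(t_n) = n$ to rewrite $s(\phi(t_n+t)) = s(\phi(t_n+t) - n)$, and then Taylor-expand $\phi$ about $t_n$: $\phi(t_n+t) - n = \phi'(t_n)t + \tfrac12\phi''(\xi_t)t^2$ for some $\xi_t$ between $t_n$ and $t_n+t$. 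Thus the two arguments of $s$ differ by $r_t := \tfrac12\phi''(\xi_t)t^2$, and $|r_t|\le\tfrac18\|\phi''\|_\infty$ (equivalently $\le\tfrac18\epsilon\|\phi'\|_\infty$ via $|\phi''|\le\epsilon\phi'$), so $|s(\phi'(t_n)t) - s(\phi(t_n+t))| \le \|s'\|_\infty|r_t|$, multiplied by $|A(t_n+t)|\le\|A\|_\infty$. Summing the two bounds gives $|g_n(t) - f_n(t)| \le \bigl(\tfrac12\|\phi'\|_\infty\|s\|_\infty + \tfrac18\|A\|_\infty\|\phi'\|_\infty\|s'\|_\infty\bigr)\epsilon$, which is $C\epsilon$ with $C$ depending only on the listed quantities and uniform in $n$ and $t$.

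I do not anticipate a genuine obstacle: the statement is essentially a Lipschitz-plus-second-order-Taylor estimate once the truncations are matched. The one step that needs care is the use of periodicity of $s$ in the second summand — it is what lets me compare $s$ at two arguments that are $O(\epsilon)$ apart (namely $\phi'(t_n)t$ and $\phi(t_n+t) - n$), rather than at $\phi(t_n+t)$ itself, which is near $n$ and hence far from the interval $[-1/2,1/2]$ on which $s$ carries its one period of information. After that reduction, the two slowly-varying bounds $|A'|\le\epsilon\phi'$ and $|\phi''|\le\epsilon\phi'$, combined with the normalization $\inf\phi' = 1$, do all of the work.
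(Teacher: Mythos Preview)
Your proposal is correct and follows essentially the same route as the paper: the same add-and-subtract splitting on the common support interval, the slowly-varying bound $|A'|\le\epsilon\phi'$ for the amplitude term, and the $1$-periodicity of $s$ together with a second-order Taylor expansion of $\phi$ about $t_n$ for the phase term. The only cosmetic difference is that you invoke the mean-value/Lagrange forms directly (yielding slightly cleaner constants), whereas the paper writes out the integral remainders before bounding them.
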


\begin{proof}
Note that for each $n\in \mathbb{Z}$, $g_n$ and $f_n$ are all supported in $[-\sfrac{1}{2},\sfrac{1}{2}]$. By a direct calculation, we have
\begin{align}
|g_n(t)-f_n(t)| \leq &|A(t_n)-A(t_n+t)||s(\phi'(t_n)t)|\\
&+A(t_n+t)|s(\phi'(t_n)t)-s(\phi(t_n+t))|\nonumber\,.
\end{align}
for all $t\in [-\sfrac{1}{2},\sfrac{1}{2}]$, and $|g_n(t)-f_n(t)|=0$ for all $t\notin [-\sfrac{1}{2},\sfrac{1}{2}]$.
By a direct bound, we have
\begin{align}
|A(t_n)-A(t_n+t)|&\leq \int_{t_n}^{t_n+t}|A'(x)|\, dx\leq \epsilon \int_{t_n}^{t_n+t}[\phi'(t_n)+\int_{t_n}^x\phi''(z)\, dx]\, dx\nonumber\\
&\leq \epsilon\phi'(t_n)|t|+\epsilon \frac{1}{2}\|\phi''\|_\infty |t|^2\leq \frac{1}{2}\epsilon\Big[\phi'(t_n)+\frac{1}{4}\|\phi''\|_\infty \Big]\,,
\end{align}
where the last bound holds since we only need to control $t\in[-\sfrac{1}{2},\sfrac{1}{2}]$. For the other term, note that 
\begin{align}
s(\phi(t_n+t))&=s\left( \phi(t_n)+\phi'(t_n)t+\int^{t_n+t}_{t_n}\phi''(z)z\, dz \right)\\
&=s\left(\phi'(t_n)t+\int^{t_n+t}_{t_n}\phi''(z)z\, dz\right)\nonumber
\end{align}
since $\phi(t_n)\in \mathbb{Z}$ and $s$ is $1$-periodic. 
Therefore, by denoting $J:=\int^{t_n+t}_{t_n}\phi''(z)z\, dz$, we have
\begin{align}
|s(\phi'(t_n)t)-s(\phi(t_n+t))|&\leq \int^{t_n+J}_{t_n}|s'(x)|\, dx=J\|s'\|_\infty.
\end{align}
$J$ is controlled in the same way:
\begin{align}
|J|&\leq \epsilon \int^{t_n+t}_{t_n}\phi'(z)z\, dz\leq \epsilon \int^{t_n+t}_{t_n}z\Big[\phi'(t_n)+\int^z_{t_n}\phi''(w)\, dw\Big] dz\nonumber\\
&\leq \epsilon\Big(\phi'(t_n)|t|^2+\frac{1}{3}\|\phi''\|_\infty|t|^3\Big)\leq  \frac{1}{4}\epsilon\Big(\phi'(t_n)+\frac{1}{6}\|\phi''\|_\infty\Big).\nonumber
\end{align}
As a result, we obtain the claim with $C=C(\|A\|_\infty,\|\phi'\|_\infty,\|s\|_{\mathcal{C}^1},\|\phi''\|_\infty)$.
\end{proof}

A direct consequence of this theorem is that $\{g_n\}_{n\in \mathbb{Z}}$ is a subset of the one-chart manifold $\mathcal M:=\Phi(U)$ described in Theorem \ref{theorem: simplified wave-shape manifold}, where $U=I_a\times I_f$, $I_a=(\inf_n A(t_n),\sup_n A(t_n))$ and $I_f=(1,\sup_n \phi'(t_n))$. As a result, the collection of oscillatory cycles, $\{f_n\}_{n\in \mathbb{Z}}$, can be parametrized by $\mathcal M$ up to a controllable error depending on $\epsilon C(\|A\|_\infty,\|\phi'\|_\infty,\|s\|_{\mathcal{C}^1},\|\phi''\|_\infty)$. We mention that a similar argument can be applied to the signal satisfying the generalized phenomenological model summarized in Section \ref{section:generalization of the phenomenological model} with more tedious notation and calculation. Since it does not shed more light upon the topic, we omit the details.

\section{Theoretical support for the proposed DDmap algorithm}\label{Section:Theory}
{The {DDmap} algorithm is proposed as a tool for recovering the wave-shape manifold and hence the dynamics along its surface. Behind the {DM} algorithm are a number of theoretical results which guarantee the effectiveness {of DDmap} and which we summarize here.}
We need the following two assumptions regarding the wave-shape manifold.

\begin{assumption}\label{Assumption:manifold}
Assume the wave-shape manifold $\mathcal M_f$ is an $m$-dimensional, closed (compact without boundary){,} and smooth Riemmanian manifold embedded in $\mathbb{R}^p$ with the Riemmanian metric $g$ induced from the canonical metric of $\mathbb{R}^p$, where $m\leq p$. 
We assume that $\mathcal{S}=\{s_i\}_{i=1}^n$ is independently and identically sampled from a random vector $S:(\Omega,\mathcal{F},\mathbb{P})\to \mathbb{R}^p$, where the range of $S$ is supported on {$\mathcal{M}_f$}.
\end{assumption}

\begin{assumption}\label{Assumption:pdf}
We assume that the induced measure {$S_*\mathbb{P}$} on the Borel sigma algebra on $\mathcal M_f$ is absolutely continuous with respect to the Riemannian measure $dV_g$. Furthermore, we assume that the function $\mathsf{p}:=\frac{S_*\mathbb{P}}{dV_g}: \mathcal{M}_f \to \mathbb{R}^+$ given by Radon-Nikodym theorem is bounded away from zero and is sufficiently smooth. We call $\mathsf{p}$ the {\em probability density function} (p.d.f.) on $\mathcal M_f$ associated with ${S}$. When $\mathsf{p}$ is a constant function, we say ${S}$ is uniform; otherwise{,} ${S}$ is nonuniform.
\end{assumption}

\subsection{Recovery of the dynamics}\label{Sec:theoreticalSupportof DM}

Denote by $g_k$ and $\mu_k$ the $k$-th eigenfunction and eigenvalue of the Laplace-Beltrami operator $\Delta_g$ of the $d$-dim wave-shape manifold $\mathcal{M}_f$, where $k\in \mathbb{N}$; that is $\Delta_g g_k=-\mu_kg_k$. According to basic elliptic theory \cite{Berard:1986}, the spectrum of $\Delta_g$ is discrete and accumulates at $\infty$; that is, $\mu_1=\mu_2=\ldots=\mu_{n_c}=0<\mu_{n_c+1}\leq \mu_{n_c+2},\ldots$, where $n_c\in \mathbb{N}$ is the number of connected components of $\mathcal{M}_f$, and the dimension of each eigenspace, denoted as $E_k$, is finite, except at the accumulation eigenvalue.  
We denote by $\phi_k$ and $\tilde\lambda_k$ the $k$-th eigenvector and eigenvalue of the GL $h^{-1}(I - D^{-1}W)$ constructed from the point cloud $\mathcal{S}_f=\{s_i\}_{i=1}^n\subset \mathbb{R}^p$ \eqref{Definition:Atransition}. Note that $\lambda_k = 1 - h\tilde\lambda_k$ and that the GL and the diffusion operator $A$ have the same eigenvectors.

Now, assume Assumptions \ref{Assumption:manifold} and \ref{Assumption:pdf} hold, and assume that $\mathsf{p}$ is uniformly bounded from below and above. Suppose $\varphi:\mathbb{R}\to \mathcal{M}_f$ represents the dynamics in which we have interest so that $\varphi(t_j) = s_j$; that is, $s_j$ is located at time $t_j$.
We now claim that we can recover the dynamics $\varphi$ by the DM. 

Under the above setup, recall the recent $L^\infty$ spectral convergence result reported in \cite[Theorem 2]{dunson2019diffusion}. Fix $K\in \mathbb{N}$ and consider $h=h(n)$ so that $h=n^{-\frac{1}{4{m}+15}}$. When $n$ is sufficiently large, depending on $K$, 
%
For all $k\leq K$, with probability at least $1-n^{-2}$, we have
\begin{equation}\label{embedding spectral convergence eigenvalue}
|\tilde\lambda_k-\mu_k|\leq \Omega_1\epsilon^{3/2}
\end{equation} 
and
\begin{align}\label{embedding spectral convergence eigenvector}
\max_{s_i}|a_{k}\phi_k(i)-g_k(x_i)| \leq \Omega_2 \epsilon^{1/2}\,,\nonumber
\end{align}
where $\Omega_1$ and $\Omega_2$ depend on the curvature and density function. We refer readers with interest to \cite[Theorem 2]{dunson2019diffusion} for a detailed description of all relevant quantities.
As a result, for each fixed $k\leq K$, we have
\begin{equation}
|\tilde\lambda_k-\mu_k|\to 0\mbox{ and }\max_{s_i}|\phi_k(i)-g_k(s_i)|\to 0
\end{equation} 
when $n\to \infty$ almost surely. 
This spectral convergence result of the GL emphasizes what $\phi_k$ estimates.
Particularly, by composing the eigenvectors and the temporal information, the above shows that $\phi_k=[\phi_k(1),\ldots,\phi_k(n)]^\top$ is actually an estimate of $[g_k\circ \varphi(t_1),\ldots,g_k\circ \varphi(t_n)]^\top$ but {\em not} a direct estimate of the intrinsic dynamics. However, it does contain useful information concerning the intrinsic dynamics; in this sense the product of the proposed algorithm is a surrogate for the dynamics of interest.

The above result can be immediately combined with the spectral embedding theory \cite{Berard:1986,Berard_Besson_Gallot:1994} to justify how the DM recovers the manifold, and hence the dynamics on it.
The {\em spectral embedding} of $\mathcal M_f$ is defined as follows \cite{Berard_Besson_Gallot:1994}. Take an $L^2(\mathcal{M}_f)$ basis $a=\{g_k\}_{k=1}^\infty\in \Pi_{k=1}^\infty O(\text{dim}(E_k))$. Define
\begin{equation}
\Psi^a_t:x\mapsto (2t)^{\frac{m+2}{4}}\sqrt{2}(4\pi)^{\frac{m}{4}}(e^{-t\mu_k}g_k(x))_{k=1}^\infty\in \ell_2,
\end{equation}
where $t>0$ is the {\em diffusion time}. 
It is shown in \cite[Theorem 5]{Berard_Besson_Gallot:1994} that $\Psi^a_t$ is not only an embedding for any $t>0$, but also an almost-isometric embedding when $t>0$ is sufficiently small; that is, the pulled-back metric $(\Psi^a_t)^*\texttt{can}$, where $\texttt{can}$ is the canonical metric on $\ell^2$, satisfies
\begin{equation}
(\Psi^a_t)^*\texttt{can}=g+\frac{2t}{3}\Big(\frac{1}{2}\text{Scal}_gg-\text{Ric}_g\Big)+O(t^2)
\end{equation} 
when $t\to 0$, where $\text{Scal}_g$ is the scalar curvature and $\text{Ric}_g$ is the Ricci curvature.
Recently, the spectral embedding theory of the Laplace-Beltrami operator was generalized to the finite-dimensional setting \cite{Bates:2014,Portegies:2015}. Denote by $\kappa$ a lower bound on the Ricci curvature of $\mathcal M_f$. Then, for a given tolerable error $\epsilon>0$, there exists a $t_0=t_0(m,\epsilon, \kappa,\iota_0)$ such that for all $0<t<t_0$, there exists an $N_E=N_E(m, \epsilon,t,\kappa,\iota_0,V)$, where $\iota_0$ is the lower bound of injectivity radius and $V$ is the upper bound of the volume, such that if $q\geq N_E$, the finite-dimensional map
\begin{equation}\label{Embededing finite spectral}
\Psi^{a,(q)}_t:x\mapsto \sqrt{2}(4\pi)^{\frac{m}{4}}(2t)^{\frac{m+2}{4}}(e^{-t\mu_k}g_k(x))_{k=1}^q\in \mathbb{R}^q
\end{equation}
is an embedding and satisfies
\begin{equation}\label{Embededing finite spectral almost isometric}
1-\epsilon\leq \|(d\Psi_t^{a,(q)})_x\|\leq 1+\epsilon.
\end{equation}
Note that the DM defined in (\ref{DM}) {truncated to $q$ coordinates} can be viewed as a discretization of ${\Psi^{a,(q)}_t}$ without the universal constant $(2t)^{\frac{m+2}{4}}\sqrt{2}(4\pi)^{\frac{m}{4}}$, while the eigenvalues and eigenvectors are estimated from the data. 

Now, we put everything together. For a given sufficiently small $\epsilon$, take $0<t<t_0(m,\epsilon, \kappa,\iota_0)$. Fix $q\geq N_E(m, \epsilon,t,\kappa,\iota_0,V)$. When $n$ is sufficiently large, with high probability, the spectral convergence \eqref{embedding spectral convergence eigenvalue} holds for all $k$-th eigenvalues and eigenvectors when $k\leq q$.
As a result, when $n$ is finite, with high probability, the DM recovers the manifold $\mathcal M_f$ due to the finite spectral embedding \eqref{Embededing finite spectral}.

\subsection{Robustness to noise}\label{Sec:theoreticalSupportof DM robustness}

A real physiological time series is inevitably noisy. 
The wave-shape $x_i\in \mathcal{X}_f$ is a noisy version of $s_i\in\mathcal{S}_f \subset\mathcal  M_f$.
When the data is noisy, it has been shown in \cite{ElKaroui:2010a,ElKaroui_Wu:2016b} that under some mild assumptions, the DM is robust to noise, which we summarize here. Denote by $W$ and $W_0$ the affinity matrices associated with $\mathcal{X}_f$ and $\mathcal{S}_f$ respectively. By \cite{ElKaroui:2010a,ElKaroui_Wu:2016b} when the connection group is the trivial $SO(1):=\{1\}$ group, if $\sup_{i,j}|W_{i,j}-(W_0)_{i,j}|\leq \epsilon$ for $\epsilon>0$, and $\inf_{i}\sum_{j\neq i}W_{i,j}/n>\gamma$ and $\gamma>\epsilon$,
we have 
\begin{equation}
\|A(W)-A(W_0)\|\leq \frac{\epsilon}{\gamma}\left(1+\frac{1}{\gamma-\epsilon}\right),
\end{equation}
where $\|\cdot\|$ is the operator norm.
By Weyl's inequality and the Davis-Kahan $\sin(\theta)$ theorem, the first $q'$ eigenvectors and eigenvalues are well-reconstructed up to a controllable error, where the number $q'$ depends on the noise level.
When $q'$ is large enough so that $q'\geq N_E(m, \epsilon,t,\kappa,\iota,V)$, with the finite-dimensional embedding result, we are guaranteed a reconstruction of the manifold. We thus conclude that the DM affords us a reconstruction of the clean manifold up to a tolerable error. 

\end{document}